\documentclass[11pt]{article}
\usepackage{mathrsfs}
\usepackage{color}
\usepackage{enumerate}
\usepackage{mathrsfs}
\usepackage{amssymb}
\usepackage{amsfonts}
\usepackage{amsmath}
\usepackage{multirow}
\usepackage{amsthm}
\usepackage{url}
\usepackage{tikz}

\oddsidemargin  0pt \evensidemargin 0pt \marginparwidth 40pt
\marginparsep 10pt \topmargin -10pt \headsep 10pt \textheight
8.7in \textwidth 6.7in \textheight 7.8375in
\newtheorem{theorem}{Theorem}[section]

\newtheorem{lemma}[theorem]{Lemma}
\newtheorem{example}[theorem]{Example}

\newtheorem{proposition}[theorem]{Proposition}
\newtheorem{problem}{Problem}
\newtheorem{conjecture}[problem]{Conjecture}

\newtheorem{remark}{Remark}[section]


\def\c{\mathcal{C}}
\def\P{P}

\begin{document}
\title{ Maximum Distance Separable Codes for $b$-Symbol Read Channels}
\author{Baokun Ding$^{\text{a}}$, Tao Zhang$^{\text{a}}$ and Gennian Ge$^{\text{b,c,}}$\thanks{Corresponding author (e-mail: gnge@zju.edu.cn). Research supported by the National Natural Science Foundation of China under Grant Nos. 11431003 and 61571310.}\\
\footnotesize $^{\text{a}}$ School of Mathematical Sciences, Zhejiang University, Hangzhou 310027, China\\
\footnotesize $^{\text{b}}$ School of Mathematical Sciences, Capital Normal University, Beijing 100048, China\\
\footnotesize $^{\text{c}}$ Beijing Center for Mathematics and Information Interdisciplinary Sciences, Beijing 100048, China.\\}
\date{}
\maketitle
\begin{abstract}
  Recently, Yaakobi et al.\ introduced codes for $b$-symbol read channels, where the read operation is performed as a consecutive sequence of $b>2$ symbols. In this paper, we establish a Singleton-type bound on $b$-symbol codes. Codes meeting the Singleton-type bound are called maximum distance separable (MDS) codes, and they are optimal in the sense they attain the maximal minimum $b$-distance. Based on projective geometry and constacyclic codes, we construct new families of linear MDS $b$-symbol codes over finite fields. And in some sense, we completely determine the existence of linear MDS $b$-symbol codes over finite fields for certain parameters.
\end{abstract}
\medskip
\noindent {{\it Key words\/}: MDS $b$-symbol codes, projective geometry, constacyclic codes.}

\noindent {\it Mathematics subject classifications\/}: 94B25, 94B60. 
\smallskip
\section{Introduction}
In the traditional information theory, noisy channels are analyzed generally by dividing the message into individual information units. However, with the development of storage technologies, one finds that symbols cannot always be written and read consistently in channels that output overlapping symbols.

Cassuto and Blaum \cite{CB} first proposed a new coding framework for symbol-pair read channels. For a complete comprehension of the fruitful results on this topic, please refer to \cite{CB,CL,CJKWY,CLL,DGZZZ,KZL15,Li2016,YBS} and the references therein.
Recently, Yaakobi et al.\ \cite{YBH16} generalized the coding framework for symbol-pair read channels to that for $b$-symbol read channels, where the read operation is performed as a consecutive sequence of $b>2$ symbols. They also generalized some of the known results for symbol-pair read channels to those for $b$-symbol read channels.

This paper continues the investigation of  codes  for $b$-symbol read channels. We establish a Singleton-type bound for $b$-symbol codes, and construct several families of linear MDS $b$-symbol codes over finite fields  as follows:
\begin{enumerate}
  \item[(1)] there exists an MDS $(n,7)_{q}$ $3$-symbol code for $q$ being a prime power and $7\le n\le q^3+q^2+q+1$;
  \item[(2)] there exists an MDS $(n,9)_{q}$ $4$-symbol code for $q\ge 3$ being a prime power and $9\le n\le q^4+q^3+q^2+q+1$;
  \item[(3)] there exists an MDS $(n,2b+1)_{q}$ $b$-symbol code for $q$ being a prime power, $q\ge b\ge 5$ and $2b+1\le n\le q^{b}-bq^{b-1}+\frac{b^2+3b}{2}$;
  \item[(4)] there exists an MDS $(n,2b)_{q}$ $b$-symbol code with  $n\ge 2b$ for $q\ge b-1$ being a prime power, $b\ge 3$ or $q=2,b=4$;
  \item[(5)] there exists an MDS $(n,10)_{q}$ $5$-symbol code for $q\ge 3$ being a prime power and $n\ge 10$;
  \item[(6)] there exists an MDS $(n,2b)_{q}$ $b$-symbol code for $q$ being a prime power, $b\ge 5$, $n\ge 2b$ and $b|n$;
  \item[(7)] there exists an MDS $(\frac{q^{b+1}-1}{q-1},2b+1)_{q}$ $b$-symbol code for $q$ being prime power and any $b\ge 4$.
\end{enumerate}

The family  (4) indicates that a linear MDS $b$-symbol codes over $\mathbb{F}_{q}$ with $b=3,d_{3}=6$ or $b=4,d_{4}=8$ exists for any length $n$, $n\ge 2b$.  We also claim that a linear MDS $(n,2b+1)_q$ $b$-symbol code over $\mathbb{F}_{q}$ exists only when $2b+1\le n\le \frac{q^{b+1}-1}{q-1}$ (Lemma \ref{3.3}). Thus, in some sense, some of the families above completely determine the existence of linear MDS $b$-symbol codes over finite fields for certain parameters. Besides, we show that a linear MDS $(n,d_{b})_{q}$ $b$-symbol code with  $d_{b}<n$ is also an MDS $(n,d_{b}+1)_{q}$ $(b+1)$-symbol code (Theorem \ref{thmb+1}). Therefore, we can derive new MDS $b$-symbol codes from each family above.

This paper is organized as follows. In Section~\ref{pre} we present basic notations about $b$-symbol codes and derive a Singleton-type bound for $b$-symbol codes. In Section~\ref{conpro}, we construct MDS $b$-symbol codes from projective geometry. And in Section~\ref{concyc}, we give a construction of MDS $b$-symbol codes from constacyclic codes. Section~\ref{conclu} concludes the paper.
\section{Preliminaries}\label{pre}
Let $\Sigma$ be the alphabet consisting of $q$ elements, each element of which is called a symbol. Let $b$ be an integer and $b\ge 1$. For a vector ${\bf x}=(x_{0},x_{1},\cdots,x_{n-1})$ in $\Sigma^{n}$, we define the b-symbol read vector of ${\bf x}$ as
$$\pi_{b}({\bf x})=((x_{0},\cdots,x_{b-1}),(x_{1},\cdots,x_{b}),\cdots,(x_{n-1},x_{0},\cdots,x_{b-2}))\in(\Sigma^{b})^{n}.$$

Throughout this paper, let $q$ be a prime power and $\mathbb{F}_{q}$ be the finite field containing $q$ elements. We will focus on vectors over $\mathbb{F}_{q}$, so $\Sigma=\mathbb{F}_{q}$.
For two vectors ${\bf x}$, ${\bf y}$ in $\mathbb{F}_{q}^{n}$, we have $$\pi_{b}({\bf x}+{\bf y})=\pi_{b}({\bf x})+\pi_{b}({\bf y}),$$ and the $b$-distance between ${\bf x}$ and ${\bf y}$ is defined as
$$D_{b}({\bf x},{\bf y}):=|\lbrace 0\le i\le n-1:(x_{i},\cdots,x_{i+b-1})\neq (y_{i},\cdots,y_{i+b-1})\rbrace|,$$
 where the subscripts are reduced modulo $n$. Accordingly, the $b$-weight of ${\bf x}\in \mathbb{F}_{q}^n$ is defined as
$$wt_{b}({\bf x}):=|\lbrace 0\le i\le n-1:(x_{i},\cdots,x_{i+b-1})\neq {\bf 0}\rbrace|,$$
where the subscripts are reduced modulo $n$ and ${\bf 0}$ denotes the all-zeros vector. The Hamming distance between two vectors ${\bf x}$ and {\bf y} is denoted by $d_{H}({\bf x},{\bf y})$. Similarly, the Hamming weight of a vector ${\bf x}$ is denoted by $wt_{H}$({\bf x}). We have the following connection between the $b$-distance and the $b$-weight.

\begin{proposition}\label{prop1}
  For all ${\bf x},{\bf y}\in \mathbb{F}_{q}^n$, $D_{b}({\bf x},{\bf y})=wt_{b}({\bf x}-{\bf y}).$
\end{proposition}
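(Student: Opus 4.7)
The plan is to unfold both sides of the claimed equality directly from the definitions and match them up, exploiting the linearity identity $\pi_{b}(\mathbf{x}+\mathbf{y})=\pi_{b}(\mathbf{x})+\pi_{b}(\mathbf{y})$ that was recorded just above the proposition. This is essentially the standard Hamming-style reduction from distance to weight, adapted to the $b$-symbol read setting, so I expect no genuine obstacle.

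Concretely, I would fix $\mathbf{x},\mathbf{y}\in\mathbb{F}_q^{n}$ and set $\mathbf{z}:=\mathbf{x}-\mathbf{y}$ with coordinates $z_{i}=x_{i}-y_{i}$ (indices modulo $n$). For each $0\le i\le n-1$, I would observe the coordinate-wise identity
\[
(x_{i},\ldots,x_{i+b-1})-(y_{i},\ldots,y_{i+b-1})=(z_{i},\ldots,z_{i+b-1}),
\]
so that the $i$-th window of $\mathbf{x}$ and $\mathbf{y}$ differ in $(\mathbb{F}_q)^{b}$ if and only if the $i$-th window of $\mathbf{z}$ is not the all-zeros vector $\mathbf{0}\in\mathbb{F}_q^{b}$. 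In other words, the indicator of $(x_{i},\ldots,x_{i+b-1})\neq(y_{i},\ldots,y_{i+b-1})$ equals the indicator of $(z_{i},\ldots,z_{i+b-1})\neq\mathbf{0}$.

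Summing these indicators over $i=0,1,\ldots,n-1$ and comparing with the definitions of $D_{b}$ and $wt_{b}$ yields
\[
D_{b}(\mathbf{x},\mathbf{y})=\bigl|\{0\le i\le n-1:(x_{i},\ldots,x_{i+b-1})\neq(y_{i},\ldots,y_{i+b-1})\}\bigr|
=\bigl|\{0\le i\le n-1:(z_{i},\ldots,z_{i+b-1})\neq\mathbf{0}\}\bigr|
=wt_{b}(\mathbf{z}),
\]
which is exactly $wt_{b}(\mathbf{x}-\mathbf{y})$, as required. The only technicality to be careful about is the cyclic indexing convention (subscripts reduced modulo $n$), but this is identical on both sides of the equation and causes no issue.
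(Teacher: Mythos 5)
Your proof is correct and is essentially the paper's argument: the paper writes the same computation as the one-line chain $D_{b}({\bf x},{\bf y})=d_{H}(\pi_{b}({\bf x}),\pi_{b}({\bf y}))=wt_{H}(\pi_{b}({\bf x})-\pi_{b}({\bf y}))=wt_{H}(\pi_{b}({\bf x}-{\bf y}))=wt_{b}({\bf x}-{\bf y})$, using the linearity of $\pi_{b}$ exactly where you use the window identity $(x_{i},\ldots,x_{i+b-1})-(y_{i},\ldots,y_{i+b-1})=(z_{i},\ldots,z_{i+b-1})$. Your version simply unfolds that chain coordinate-window by coordinate-window; no substantive difference.
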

\begin{proof}
 Note that for  ${\bf x}$, ${\bf y}$ in $\mathbb{F}_{q}^{n}$, we have $D_{b}({\bf x},{\bf y})=d_{H}(\pi_{b}({\bf x}),\pi_{b}({\bf y}))=wt_{H}(\pi_{b}({\bf x})-\pi_{b}({\bf y}))=wt_{H}(\pi_{b}({\bf x}-{\bf y}))=wt_{b}({\bf x}-{\bf y})$.
\end{proof}

Meanwhile, the connection between the Hamming weight and the $b$-weight was proven in \cite{YBH16} for vectors over the alphabet $\lbrace 0,1\rbrace$. Since the proof also works for vectors over  $\mathbb{F}_{q}$, we present the following proposition directly.
\begin{proposition}\label{prop2}
  Let ${\bf x}\in \mathbb{F}_{q}^{n}$ be such that $0<wt_{H}({\bf x})\le n-(b-1)$. Then, $$wt_{H}({\bf x})+b-1\le wt_{b}({\bf x})\le b\cdot wt_{H}({\bf x}).$$
\end{proposition}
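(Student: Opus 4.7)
The plan is to translate both inequalities into statements about which $b$-windows $(x_j,\dots,x_{j+b-1})$ are (non)zero, using the cyclic structure of the nonzero positions of $\mathbf{x}$. Let $w = wt_H(\mathbf{x})$ and list the nonzero coordinates in cyclic order as $i_1 < i_2 < \cdots < i_w$, with the convention $i_{w+1} := i_1 + n$, so the cyclic gaps $g_k := i_{k+1} - i_k$ satisfy $g_k \ge 1$ and $\sum_k g_k = n$.

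For the upper bound, I would note that each nonzero coordinate $x_{i_k}$ lies in exactly $b$ consecutive windows, namely those starting at $i_k - b + 1, \dots, i_k$ (mod $n$). Since every nonzero window must contain at least one nonzero coordinate, the set of starting positions of nonzero windows is covered by $w$ cyclic arcs of length $b$, yielding $wt_b(\mathbf{x}) \le bw$ by the union bound.

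For the lower bound I would count all-zero windows instead. A window starting at $j$ is all-zero exactly when the block $[j, j+b-1]$ sits inside one of the cyclic gaps of zeros, and a direct count shows that gap $k$ contributes $\max(0, g_k - b)$ such starting positions. Hence
\begin{equation*}
wt_b(\mathbf{x}) \;=\; n - \sum_{k=1}^{w}\max\bigl(0,\, g_k - b\bigr),
\end{equation*}
so the desired bound $wt_b(\mathbf{x}) \ge w + b - 1$ is equivalent to $\sum_k \max(0, g_k - b) \le n - w - (b-1)$.

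The main (and only nontrivial) step is this last inequality, and I expect it to be the precise place where the hypothesis $wt_H(\mathbf{x}) \le n - (b-1)$ is needed. I would split on whether any gap exceeds $b$. If every $g_k \le b$, all summands vanish and the inequality degenerates to exactly the hypothesis. Otherwise, setting $S := \{k : g_k > b\}$ and $s := |S| \ge 1$, one has
\begin{equation*}
\sum_{k \in S}(g_k - b) \;=\; n - \sum_{k \notin S} g_k - bs \;\le\; n - (w - s) - bs \;=\; n - w - (b-1)s \;\le\; n - w - (b-1),
\end{equation*}
using $g_k \ge 1$ for each of the $w - s$ indices outside $S$. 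The potential pitfall is overlooking the regime $s = 0$, where no gap is long enough to produce an all-zero window and the hypothesis is genuinely invoked; the case split above makes this transparent.
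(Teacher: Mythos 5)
Your proof is correct. Note that the paper itself does not prove this proposition at all: it cites the proof in \cite{YBH16} (given there for the binary alphabet) and remarks that it carries over to $\mathbb{F}_q$, so there is no in-paper argument to compare against. Your run-length argument is a clean, self-contained derivation: the identity $wt_b(\mathbf{x}) = n - \sum_k \max(0,g_k-b)$ (equivalently $wt_b(\mathbf{x}) = \sum_k \min(b,g_k)$) is exactly right, the union bound gives the upper bound, and the case split on whether some gap exceeds $b$ correctly isolates the one place where the hypothesis $wt_H(\mathbf{x}) \le n-(b-1)$ is actually needed (the case $s=0$, where the inequality reduces verbatim to the hypothesis). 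This matches the spirit of the standard argument in the cited reference, and nothing in it depends on the alphabet being binary, which is precisely the point the paper wanted.
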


Consider the $b$-weight and $(b+1)$-weight of a nonzero vector in $\mathbb{F}_{q}^n$ and the following proposition holds.
\begin{proposition}\label{prop3}
  For any nonzero vector ${\bf x}=(x_{0},x_{1},\cdots,x_{n-1})$ in $\mathbb{F}_{q}^n$ and $wt_{b}({\bf x})<n$, we have $wt_{b+1}({\bf x})\ge wt_{b}({\bf x})+1$.
\end{proposition}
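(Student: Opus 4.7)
The plan is to show containment of supports and then exhibit at least one extra ``active'' position for the $(b+1)$-windows. For each index $i$, let $W_b(i) = (x_i,\ldots,x_{i+b-1})$ and $W_{b+1}(i) = (x_i,\ldots,x_{i+b})$ (indices mod $n$). Since $W_{b+1}(i)$ contains all coordinates of $W_b(i)$ plus one more, $W_b(i)\neq\mathbf{0}$ immediately implies $W_{b+1}(i)\neq\mathbf{0}$. So the set of positions counted by $wt_b$ is a subset of the set counted by $wt_{b+1}$, giving the baseline inequality $wt_{b+1}(\mathbf{x})\ge wt_b(\mathbf{x})$.

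It remains to produce one index $i$ for which $W_b(i)=\mathbf{0}$ but $W_{b+1}(i)\neq\mathbf{0}$; equivalently, an index $i$ with $x_i=x_{i+1}=\cdots=x_{i+b-1}=0$ and $x_{i+b}\neq 0$. The key observation is that the two hypotheses on $\mathbf{x}$ work in opposite directions: $wt_b(\mathbf{x})<n$ guarantees that some $b$-window is all zero, so $\mathbf{x}$ contains a cyclic run of at least $b$ consecutive zeros; on the other hand, $\mathbf{x}\neq\mathbf{0}$ forbids the trivial case where every coordinate is zero (a run wrapping all the way around the cycle). Consequently, in the cyclic vector $\mathbf{x}$ there exists a maximal run of consecutive zeros of length $\ell\ge b$, and this run must terminate at some nonzero coordinate $x_{j+\ell}$.

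Taking $i=j+\ell-b$, the coordinates $x_i,x_{i+1},\ldots,x_{i+b-1}$ all lie inside that run of zeros (the last $b$ entries of it), so $W_b(i)=\mathbf{0}$, while $x_{i+b}=x_{j+\ell}\neq 0$, so $W_{b+1}(i)\neq\mathbf{0}$. This index $i$ therefore contributes to $wt_{b+1}(\mathbf{x})$ but not to $wt_b(\mathbf{x})$, yielding the strict inequality $wt_{b+1}(\mathbf{x})\ge wt_b(\mathbf{x})+1$.

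The only potential subtlety lies in handling the cyclic indexing carefully when identifying a ``maximal'' zero run, so I would state this as: consider the set $Z=\{i:x_i=0\}$, which is a proper nonempty subset of $\mathbb{Z}/n\mathbb{Z}$ (proper because $\mathbf{x}\neq\mathbf{0}$, nonempty because it contains a full $b$-block by $wt_b(\mathbf{x})<n$), hence the cyclic complement $Z^c$ is nonempty and $Z$ decomposes into finitely many disjoint maximal cyclic arcs, at least one of which has length $\ge b$. Beyond this bookkeeping, the argument is elementary.
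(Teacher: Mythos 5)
Your proof is correct and follows essentially the same route as the paper's: establish $wt_{b+1}(\mathbf{x})\ge wt_b(\mathbf{x})$ by window containment, then exhibit an index $j$ with $x_j=\cdots=x_{j+b-1}=0$ and $x_{j+b}\neq 0$. The only difference is that you make explicit (via the maximal zero run and the decomposition of $Z$ into cyclic arcs) why such an index exists, a point the paper asserts in one line using both hypotheses $wt_b(\mathbf{x})<n$ and $\mathbf{x}\neq\mathbf{0}$.
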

\begin{proof}
  It is obvious that $wt_{b+1}({\bf x})\ge wt_{b}({\bf x})$, since if $(x_{i},\cdots,x_{i+b-1})\ne {\bf 0}$ then $(x_{i},\cdots,x_{i+b-1},x_{i+b})\ne {\bf 0}$, where the subscripts are reduced modulo $n$, for all $0\le i\le n-1$. We also have $(x_{j},\cdots,x_{j+b-1})={\bf 0}$ and $x_{j+b}\ne 0$ for some $0\le j\le n-1$ since $wt_{b}({\bf x})<n$. It follows that $(x_{j},\cdots,x_{j+b-1},x_{j+b})\ne {\bf 0}$, and thus $wt_{b+1}({\bf x})\ge wt_{b}({\bf x})+1$.
\end{proof}
For example, the Hamming weight of the four vectors $v_{1}=1110000, v_{2}=1100001, v_{3}=1101000,v_{4}=1010100$ are all $3$ while their $3$-weights equal $5,5,6,7$ respectively and their $4$-weights equal $6,6,7,7$ respectively. An obvious observation is that when the non-zero elements of a vector become closer, the $b$-weight tends to be smaller. And for vectors with fixed Hamming weight, one has the smallest $b$-weight when all the non-zero elements are in cyclically consecutive positions.

A code $\mathcal{C}$ over $\mathbb{F}_{q}$ of length $n$ is a nonempty subset of $\mathbb{F}_{q}^{n}$ and the elements of $\mathcal{C}$ are called codewords. The minimum $b$-distance of $\mathcal{C}$ is defined as
$$d_{b}=\min\lbrace{D_{b}(\bf x},{\bf y})\mid{\bf x},{\bf y}\in \mathcal{C},{\bf x}\neq {\bf y}\rbrace,$$
and the size of $\mathcal{C}$ is the number of codewords it contains. In general, a code $\mathcal{C}$ over $\mathbb{F}_{q}$ of length $n$, size $M$ and minimum $b$-distance $d_{b}$ is called an $(n,M,d_{b})_{q}$ $b$-symbol code. Note that the case $b=1$ is just the traditional codes that are widely studied. And the case $b=2$ corresponds to symbol-pair codes. Besides, if $\mathcal{C}$ is a subspace of $\mathbb{F}_{q}^{n}$, then $\mathcal{C}$ is called a linear $b$-symbol code. In this paper we focus on linear $b$-symbol codes over $\mathbb{F}_{q}$.

\begin{theorem}(Singleton Bound):
Let $q\geq 2$ and $b\leq d_{b}\leq n$. If $\mathcal{C}$ is an $(n,M,d_{b})_{q}$ $b$-symbol code, then we have $M\leq q^{n-d_{b}+b}$.
\end{theorem}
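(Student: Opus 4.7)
My plan is to mimic the classical Singleton bound proof by projecting $\mathcal{C}$ onto a set of $n-d_b+b$ coordinates and proving the projection is injective. Concretely, let $\phi : \mathbb{F}_q^n \to \mathbb{F}_q^{n-d_b+b}$ be the map $(x_0,\ldots,x_{n-1}) \mapsto (x_0,\ldots,x_{n-d_b+b-1})$. If the restriction $\phi|_{\mathcal{C}}$ is injective, then immediately $M = |\mathcal{C}| \leq q^{n-d_b+b}$, and the bound is proved.

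To establish injectivity, I would take $\mathbf{x},\mathbf{y} \in \mathcal{C}$ with $\phi(\mathbf{x}) = \phi(\mathbf{y})$, set $\mathbf{z} = \mathbf{x}-\mathbf{y}$, and show $\mathbf{z} = \mathbf{0}$. By construction the support of $\mathbf{z}$ lies in the cyclic arc $S := \{n-d_b+b, n-d_b+b+1, \ldots, n-1\}$ of length $d_b-b$. I then claim that any such $\mathbf{z}$ satisfies $wt_b(\mathbf{z}) \leq d_b-1$. Indeed, by definition $wt_b(\mathbf{z})$ equals the number of indices $i \in \{0,1,\ldots,n-1\}$ for which the cyclic $b$-window $\{i, i+1, \ldots, i+b-1\} \pmod n$ meets $\mathrm{supp}(\mathbf{z}) \subseteq S$; as $i$ ranges over such indices it lies in $\{n-d_b+1, n-d_b+2, \ldots, n-1\}$, a single cyclic arc of length $d_b-1$.

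Combining the claim with Proposition~\ref{prop1}, a nonzero $\mathbf{z}$ would force $D_b(\mathbf{x},\mathbf{y}) = wt_b(\mathbf{z}) \leq d_b-1 < d_b$, contradicting the minimum $b$-distance. Hence $\mathbf{z} = \mathbf{0}$, $\phi|_{\mathcal{C}}$ is injective, and $M \leq q^{n-d_b+b}$.

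The main obstacle is the $b$-weight estimate for vectors whose support sits in a short cyclic arc: Proposition~\ref{prop2} only yields $wt_b(\mathbf{z}) \leq b \cdot wt_H(\mathbf{z}) \leq b(d_b-b)$, which is too weak. One genuinely has to exploit the consecutive structure of $S$, observing that each of the $b$-windows meeting $S$ starts inside one cyclic interval of length $(d_b-b)+(b-1) = d_b-1$. Once this geometric counting step is in place, the rest is the standard puncturing argument used for the classical and symbol-pair Singleton bounds.
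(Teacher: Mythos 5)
Your proposal is correct and is essentially the paper's own argument: the paper also deletes the last $d_b-b$ coordinates and observes that those $d_b-b$ consecutive positions can affect at most $d_b-1$ of the cyclic $b$-windows, so the punctured codewords remain distinct. Your write-up merely makes explicit the window-counting step (the starting indices of affected windows form an arc of length $(d_b-b)+(b-1)=d_b-1$) that the paper states in one line.
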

\begin{proof}
Suppose that $\mathcal{C}$ is an $(n,M,d_{b})_{q}$ $b$-symbol code with $q\geq 2$ and $b\leq d_{b}\leq n$. Delete the last $d_{b}-b$ coordinates from all the codewords in $\c$. Note that any $d_{b}-b$ consecutive coordinates contribute at most $d_{b}-1$ to the $b$-distance, thus the resulting vectors of length $n-d_{b}+b$ are still distinct since $\c$ has $b$-distance $d_{b}$. The conclusion follows from the fact that the maximum number of distinct vectors of length $n-d_{b}+b$ over  $\mathbb{F}_{q}$ is $q^{n-d_{b}+b}$.
\end{proof}

An $(n,M,d_{b})_{q}$ $b$-symbol code $\c$ with $M=q^{n-d_{b}+b}$ is called a maximum distance separable (MDS) $(n,d_{b})_{q}$ $b$-symbol code.

\begin{theorem}\label{thmb+1}
A linear MDS $(n,d_{b})_{q}$ $b$-symbol code $\c$ with  $d_{b}<n$ is also an MDS $(n,d_{b}+1)_{q}$ $(b+1)$-symbol code.
\end{theorem}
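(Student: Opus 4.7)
The plan is to reduce the claim to showing $d_{b+1}(\c) = d_b + 1$. Since $\c$ is a linear MDS $(n,d_b)_q$ $b$-symbol code, it has $|\c| = q^{n-d_b+b}$ codewords; once I know that its minimum $(b+1)$-distance is exactly $d_b+1$, the identity $|\c| = q^{n-(d_b+1)+(b+1)}$ is automatic and matches the Singleton bound in the $(b+1)$-symbol setting, so the MDS conclusion is immediate. The entire task is therefore the computation of $d_{b+1}(\c)$.

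For the lower bound $d_{b+1}(\c) \ge d_b + 1$, I would fix an arbitrary nonzero codeword $\mathbf{c}\in\c$ and split into two cases. In the generic case $wt_{b}(\mathbf{c}) < n$, Proposition \ref{prop3} applies directly and yields $wt_{b+1}(\mathbf{c}) \ge wt_{b}(\mathbf{c}) + 1 \ge d_b + 1$. In the boundary case $wt_{b}(\mathbf{c}) = n$, Proposition \ref{prop3} is silent; here I would argue directly that every $(b+1)$-window contains a $b$-window as a prefix, so that all $b$-windows being nonzero forces all $(b+1)$-windows to be nonzero, i.e.\ $wt_{b+1}(\mathbf{c}) = n$, and the standing hypothesis $d_b < n$ then gives $wt_{b+1}(\mathbf{c}) \ge d_b + 1$. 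Linearity of $\c$ promotes this weight bound to the desired distance bound $d_{b+1}(\c) \ge d_b + 1$.

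For the matching upper bound I would invoke the Singleton bound in the $(b+1)$-symbol setting. Its hypothesis $b+1 \le d_{b+1}(\c) \le n$ is immediate: the left inequality from $b \le d_b$ (itself a consequence of the Singleton bound applied to $\c$ as a $b$-symbol code, combined with the previous paragraph) and the right inequality from $d_b + 1 \le n$, which is our assumption $d_b < n$. Comparing $|\c| = q^{n-d_b+b}$ with $|\c| \le q^{n-d_{b+1}(\c)+(b+1)}$ forces $d_{b+1}(\c) \le d_b + 1$, closing the sandwich.

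I anticipate essentially no real obstacle. The only step that requires a moment's care is the boundary case $wt_{b}(\mathbf{c}) = n$: Proposition \ref{prop3} is stated under the hypothesis $wt_{b}(\mathbf{x}) < n$ and so does not apply there, and one must argue directly from the window description of the $b$-weight together with $d_b < n$. Beyond this small subtlety the argument is just a clean sandwich of the Proposition \ref{prop3} lower bound and the Singleton upper bound.
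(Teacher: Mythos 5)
Your proposal is correct and follows essentially the same route as the paper: the lower bound $d_{b+1}\ge d_{b}+1$ via Proposition \ref{prop3} combined with the Singleton bound in the $(b+1)$-symbol setting. In fact you are slightly more careful than the paper's one-line argument, since you explicitly handle the boundary case $wt_{b}(\mathbf{c})=n$ where Proposition \ref{prop3} does not apply and where the hypothesis $d_{b}<n$ is actually needed.
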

\begin{proof}
From Propositions \ref{prop1} and \ref{prop3}, we always have $d_{b+1}\geq d_{b}+1$, and thus $|\c|=q^{n-d_{b}+b} \geq q^{n-d_{b+1}+b+1}$. Together with the Singleton bound, we obtain the conclusion.
\end{proof}

Now, we are ready to give a sufficient condition for the existence of MDS $b$-symbol codes.
\begin{theorem}\label{MainThm}
There exists a linear MDS $(n,d+2b-2)_{q}$ $b$-symbol code $\mathcal{C}$ if there exists a matrix with $d+b-2$ rows and $n\ge d+2b-2\ge 2b$ columns over $\mathbb{F}_{q}$, denoted by $H=[H_{0},H_{1},\cdots,H_{n-1}]$, where $H_{i}$ $(0\leq i\leq n-1)$ is the $i$-th column of $H$, satisfying:
\begin{itemize}
\item[1.] any $d-1$ columns of $H$ are linearly independent;
\item[2.] there exist $d$ linearly dependent columns;
\item[3.] any  $d+b-2$ cyclically consecutive columns are linearly independent, i.e., $H_{i},H_{i+1},\cdots,H_{i+d+b-3}$ are linearly independent for $0\leq i\leq n-1$, where the subscripts are reduced modulo $n$.
\end{itemize}
\end{theorem}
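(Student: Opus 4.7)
The plan is to take $\mathcal{C}$ to be the null space of $H$, namely $\mathcal{C} = \{\mathbf{c} \in \mathbb{F}_{q}^{n} : H\mathbf{c}^{T} = \mathbf{0}\}$, and verify directly that it has the required parameters. By condition~3, some $d+b-2$ consecutive columns of $H$ are linearly independent, so $H$ has full row rank $d+b-2$, and hence $|\mathcal{C}| = q^{n-d-b+2}$. The Singleton bound then already yields $d_{b}(\mathcal{C}) \leq d + 2b - 2$, so the remaining task is the reverse inequality $d_{b}(\mathcal{C}) \geq d + 2b - 2$.

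Let $\mathbf{c} \in \mathcal{C}$ be any nonzero codeword, $S = \mathrm{supp}(\mathbf{c})$ and $w = |S|$. Since the columns of $H$ indexed by $S$ are linearly dependent, condition~1 gives $w \geq d$, while condition~3 prevents $S$ from lying in any set of $d+b-2$ cyclically consecutive positions. The case $w \geq d + b - 1$ is straightforward: if additionally $w \leq n - b + 1$, Proposition~\ref{prop2} gives $wt_{b}(\mathbf{c}) \geq w + b - 1 \geq d + 2b - 2$; and if $w \geq n - b + 2$, the complement of $S$ has at most $b-2$ coordinates, so every window of $b$ consecutive positions meets $S$ and $wt_{b}(\mathbf{c}) = n \geq d + 2b - 2$.

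The main obstacle is the range $d \leq w \leq d + b - 2$, where Proposition~\ref{prop2} alone only delivers $wt_{b}(\mathbf{c}) \geq w + b - 1 \geq d + b - 1$, falling short by $b-1$. To close this gap I would order the elements of $S$ cyclically as $s_0 < s_1 < \cdots < s_{w-1}$ and let $g_{0}, g_{1}, \ldots, g_{w-1}$ denote the cyclic gaps between consecutive support elements, so $\sum_{i} g_{i} = n$. A short bookkeeping of the arcs $[s_{i}-b+1, s_{i}]$ yields the identity $wt_{b}(\mathbf{c}) = \sum_{i=0}^{w-1} \min(g_{i}, b)$, and the spreading condition from condition~3 becomes $g_{\max} \leq n - d - b + 2$. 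Let $k$ be the number of indices with $g_{i} \geq b$: if $k = 0$ then every gap is small and $wt_{b}(\mathbf{c}) = \sum g_{i} = n \geq d + 2b - 2$; if $k = 1$ then $wt_{b}(\mathbf{c}) = b + (n - g_{\max}) \geq b + (d + b - 2) = d + 2b - 2$; and if $k \geq 2$ then $wt_{b}(\mathbf{c}) \geq kb + (w - k) = w + k(b-1) \geq d + 2(b-1) = d + 2b - 2$, using $\min(g_{i}, b) \geq 1$ on the small gaps together with $w \geq d$. All three sub-cases produce the desired lower bound, which together with the Singleton bound yields $d_{b}(\mathcal{C}) = d + 2b - 2$, so $\mathcal{C}$ is the required MDS code.
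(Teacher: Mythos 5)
Your proposal is correct, and the overall route is the same as the paper's: take $\mathcal{C}$ to be the code with parity-check matrix $H$, read off $|\mathcal{C}|=q^{n-d-b+2}$ and the minimum Hamming weight from conditions 1--3, and then lower-bound the $b$-weight of every nonzero codeword by $d+2b-2$ (the Singleton bound supplies the matching upper bound, a step the paper leaves implicit). Where you differ is in how that weight bound is verified. The paper cyclically shifts a codeword so that it begins with a nonzero entry preceded by $b-1$ zeros, observes that the shifted word is supported in $\{0,\dots,t\}$ with $t\ge d+b-2$, $t\le n-b$, nonzero endpoints and at least $d$ nonzero entries, and then declares the bound ``easy to see''. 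Your identity $wt_{b}(\mathbf{c})=\sum_{i}\min(g_{i},b)$ over the cyclic support gaps, together with the translation of condition 3 into $g_{\max}\le n-d-b+2$ and the case split on the number of gaps of size at least $b$, is precisely the counting that this phrase suppresses, carried out uniformly and without the preliminary shift; it also makes your initial split at $w\ge d+b-1$ unnecessary, since the gap argument alone covers every $w\ge d$. Both arguments are sound; yours is the more explicit and self-contained write-up of the same construction.
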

\begin{proof}
Let $\c$ be the linear code with parity check matrix $H$. Then the first two conditions indicate that $\mathcal{C}$ is an $[n,n-d-b+2,d]$ linear code with size $q^{n-d-b+2}$. For an arbitrary codeword $c=(c_{0},c_{1},\cdots,c_{n-1})\in\mathcal{C}$, if there exists $j$ such that $c_{j}=c_{j+1}=\cdots=c_{j+b-2}=0$ and $c_{j+b-1}\neq 0$, where the subscripts are reduced modulo $n$, then one can consider the vector $v=(c_{j+b-1},\cdots,c_{n-1},c_{0},\cdots,c_{j+b-2})$. Rewrite $v$ as $v=(v_{0},v_{1},\cdots,v_{t},0,\cdots,0)$ for some $t\leq n-b$, where $v_{0},v_{t}\neq 0$. We also have $t\ge d+b-2$, since any  $d+b-2$ cyclically consecutive columns are linearly independent. Moreover, there are at least $d$ nonzero elements in the set  $\lbrace v_{0},v_{1},\cdots,v_{t}\rbrace$. It is easy to see $wt_{b}(c)=wt_{b}(v)\geq d+2b-2$. If there does not exist $j$ such that $c_{j}=c_{j+1}=\cdots=c_{j+b-2}=0$ and $c_{j+b-1}\neq 0$, then it is easy to see that $wt_{b}(c)=n$. Hence $d_{b}\ge d+2b-2$.
\end{proof}

\section{MDS $b$-symbol codes from projective geometry}\label{conpro}
 Let $V(r+1,q)$ be a vector space of rank $r+1$ over $\mathbb{F}_{q}$. The projective $r$-space over $\mathbb{F}_{q}$, denoted by $PG(r,q)$, is the geometry whose points, lines, planes, $\cdots$, hyperplanes are the subspaces of $V(r+1,q)$ of rank $1,2,3,\cdots,r$, respectively. The dimension of a subspace of $PG(r,q)$ is one less than the rank of a subspace of $V(r+1,q)$. We refer to \cite{P09} for more information on projective geometry.

Label the point of $PG(r,q)$ as $\langle(a_{0},a_{1},\cdots,a_{r})\rangle$, the subspace spanned by a nonzero vector $(a_{0},a_{1},\\\cdots,a_{r})$, where $a_{i}\in \mathbb{F}_{q}$ for $0\leq i\leq r$. Since these coordinates are defined only up to multiplication by a nonzero scalar $\lambda\in \mathbb{F}_{q}$ (here $\langle(\lambda a_{0},\lambda a_{1},\cdots,\lambda a_{r})\rangle=\langle(a_{0},a_{1},\cdots,a_{r})\rangle$), we refer to $a_{0},a_{1},\cdots,a_{r}$ as homogeneous coordinates.  Thus, the number of points in $PG(r,q)$ is given by $\frac{q^{r+1}-1}{q-1}$.

\begin{lemma}\label{lem1}
There exist $q+1$ hyperplanes in $PG(r,q)$ covering all the points in $PG(r,q)$ and  intersecting in a projective $(r-2)$-space.
\end{lemma}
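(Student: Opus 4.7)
The plan is to work at the level of the underlying vector space $V=V(r+1,q)$ and exploit the duality between hyperplanes through a fixed $(r-1)$-dimensional subspace and points of a projective line. Concretely, I will fix a rank-$(r-1)$ subspace $W\subset V$, which is precisely a projective $(r-2)$-space, and then identify the family of hyperplanes containing $W$ with $PG(V/W,q)$.

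First I would observe that the hyperplanes of $PG(r,q)$ containing the projective $(r-2)$-space determined by $W$ are in bijection with the rank-$1$ subspaces of the rank-$2$ quotient $V/W$. Since $|PG(1,q)|=q+1$, this immediately gives exactly $q+1$ such hyperplanes $H_0,H_1,\ldots,H_q$, and by construction their pairwise (indeed common) intersection contains $W$, hence is the prescribed projective $(r-2)$-space.

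Next I would verify coverage. Take an arbitrary point $P=\langle v\rangle$ of $PG(r,q)$. If $v\in W$, then $P$ lies in every one of the $q+1$ hyperplanes. Otherwise the image $\bar v$ in $V/W$ is nonzero, so $\langle\bar v\rangle$ is a point of $PG(V/W,q)$, and the unique hyperplane in our family corresponding to this point (namely the preimage in $V$ of the line $\langle\bar v\rangle$) contains $v$, hence contains $P$. Thus every point of $PG(r,q)$ lies on at least one of the $H_i$.

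To make this completely concrete, I could pick coordinates so that $W=\{x_{r-1}=x_r=0\}$; the hyperplanes containing $W$ are then cut out by the linear forms $\alpha x_{r-1}+\beta x_r=0$ with $(\alpha,\beta)\in\mathbb{F}_q^2\setminus\{(0,0)\}$ taken up to scalar, giving the $q+1$ hyperplanes, and for any point $\langle(a_0,\ldots,a_r)\rangle$ with $(a_{r-1},a_r)\neq(0,0)$ the hyperplane defined by $a_r x_{r-1}-a_{r-1}x_r=0$ contains it. There is no real obstacle here; the only thing to be careful about is the bookkeeping between vector-space rank and projective dimension (hyperplane $\leftrightarrow$ rank $r$ subspace, projective $(r-2)$-space $\leftrightarrow$ rank $r-1$ subspace), which is what makes the quotient $V/W$ land in rank $2$ and produce exactly $q+1$ hyperplanes.
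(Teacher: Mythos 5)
Your proof is correct, but it takes a different route from the paper's. The paper argues greedily: it fixes a projective $(r-2)$-space $U$, picks a point $\P_0$ outside $U$ to span a hyperplane $V_0$, then picks $\P_1$ outside $V_0$ to span $V_1$ with $U$, and repeats until every point is covered, concluding that $q+1$ hyperplanes through $U$ result. That argument leaves the termination count and the disjointness of the newly covered points implicit (they follow because the hyperplanes through $U$ partition $PG(r,q)\setminus U$, but the paper does not say so). Your quotient-space argument makes exactly this structure explicit: identifying the hyperplanes through $W$ with the points of $PG(V/W,q)\cong PG(1,q)$ gives the count $q+1$ immediately, and coverage is the one-line observation that a vector not in $W$ has nonzero image in $V/W$. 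One small point worth tightening: you say the common intersection ``contains $W$, hence is'' the projective $(r-2)$-space; the ``hence'' deserves the extra half-sentence that two distinct members of your family correspond to distinct rank-$1$ subspaces of $V/W$ meeting only in $0$, so the intersection of any two (and hence of all) of the hyperplanes is exactly $W$. With that, your version is if anything cleaner and more self-contained than the paper's, at the cost of introducing the quotient-space formalism; the paper's version stays entirely inside the synthetic projective-geometry language it uses elsewhere.
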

\begin{proof}
Fix a projective $(r-2)$-space $U$ in $PG(r,q)$. Choose an arbitrary point $\P_{0}$ in $PG(r,q)\setminus U$, then $U$ and $\P_{0}$ generate a hyperplane $V_{0}$. Next, choose $\P_{1}$ in $PG(r,q)\setminus V_{0}$, which together with $U$ forms another hyperplane $V_{1}$. Repeat the procedure until all the points are covered. We obtain $q+1$ hyperplanes $V_{0},\cdots,V_{q}$, which intersect in $U$.
\end{proof}

In Theorem \ref{MainThm}, if we fix $d=3$, choose $n$ points in $PG(b,q)$, $b\geq 2$, and regard them as column vectors of the matrix $H$, then we have the following lemma.

\begin{lemma}\label{Thm1}
There exists a linear MDS $(n,2b+1)_{q}$ $b$-symbol code $\c$ if there exists a set $\mathcal{S}$ of $n\geq 2b+1$ points of $PG(b,q)$ satisfying the following conditions:
\begin{itemize}
\item[1.] there exist $3$ points in $\mathcal{S}$ lying on a line;
\item[2.] if the $n$ points are ordered, say $\P_{0},\P_{1},\cdots,\P_{n-1}$, then any $b+1$ cyclically consecutive points, i.e., $\P_{i},\P_{i+1},\cdots,\P_{i+b}$, where the subscripts are reduced modulo $n$, do not lie in a projective $(b-1)$-space for $0\leq i\leq n-1$.
\end{itemize}
\end{lemma}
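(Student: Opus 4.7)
The plan is to apply Theorem \ref{MainThm} with $d=3$, using as the matrix $H$ the $(b+1)\times n$ matrix whose $i$-th column is a representative vector in $V(b+1,q)\cong\mathbb{F}_q^{b+1}$ of the point $\P_i\in\mathcal{S}$. Since points in $PG(b,q)$ are precisely the rank-$1$ subspaces of $V(b+1,q)$, this matrix has exactly $d+b-2=b+1$ rows and $n\ge d+2b-2=2b+1$ columns, matching the hypotheses of Theorem \ref{MainThm}. The remaining task is to translate each of the three column conditions of Theorem \ref{MainThm} into the two geometric conditions of the lemma.

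First I would verify that any $d-1=2$ columns of $H$ are linearly independent. This is automatic: two representative vectors are linearly dependent precisely when they span the same rank-$1$ subspace, i.e., they define the same point of $PG(b,q)$; since $\mathcal{S}$ consists of $n$ distinct points, no two columns of $H$ are scalar multiples of each other. Second, I would observe that three representatives are linearly dependent exactly when the three corresponding points lie in a common rank-$2$ subspace, i.e., on a common line of $PG(b,q)$; so condition (1) of the lemma immediately yields three linearly dependent columns of $H$, fulfilling the second hypothesis of Theorem \ref{MainThm}.

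Third, I would handle the cyclic-consecutive condition. Here a projective $(b-1)$-space in $PG(b,q)$ is precisely a hyperplane, corresponding to a rank-$b$ subspace of $V(b+1,q)$. Hence $b+1$ points $\P_i,\P_{i+1},\dots,\P_{i+b}$ lie in some projective $(b-1)$-space if and only if their representatives all lie in some $b$-dimensional subspace of $V(b+1,q)$, which happens if and only if those $b+1$ vectors are linearly dependent (since $b+1$ linearly independent vectors in a space of rank $b+1$ form a basis and cannot lie in a proper subspace). Thus condition (2) of the lemma is exactly the statement that every $b+1$ cyclically consecutive columns of $H$ are linearly independent, which is the third hypothesis of Theorem \ref{MainThm}.

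With all three hypotheses of Theorem \ref{MainThm} verified, the linear code $\c$ with parity check matrix $H$ is a linear MDS $(n,\,d+2b-2)_q=(n,\,2b+1)_q$ $b$-symbol code, completing the proof. There is no real obstacle here: the lemma is essentially a dictionary between the linear-algebraic conditions of Theorem \ref{MainThm} (with $d=3$) and the geometric conditions on the point set $\mathcal{S}\subset PG(b,q)$, and the only minor subtlety to state carefully is the equivalence between ``$b+1$ points lying in a projective $(b-1)$-space'' and ``$b+1$ representative vectors being linearly dependent.''
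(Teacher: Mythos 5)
Your proposal is correct and takes exactly the route the paper intends: the paper presents this lemma as the specialization of Theorem \ref{MainThm} to $d=3$ with the columns of $H$ taken to be homogeneous coordinate vectors of the points, and your verification of the three hypotheses (distinct points give pairwise independence, three collinear points give three dependent columns, and ``not in a projective $(b-1)$-space'' is equivalent to linear independence of the $b+1$ representatives) is precisely the dictionary the paper leaves implicit. No gaps.
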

Note that the first condition in Lemma \ref{Thm1} can be easily satisfied, thus we focus on ordering points in $PG(b,q)$ such that any $b+1$ cyclically consecutive points do not lie in a projective $(b-1)$-space.

  Since a nonzero element in a codeword can contribute at most $b$ to the $b$-weight, a $b$-symbol code whose minimum $b$-distance equals $2b+1$ must have the minimum Hamming distance being equal to or greater than $3$. In other words, the parity-check matrix of any linear MDS $(n,2b+1)_{q}$ $b$-symbol code should be of size $(b+1)\times n$ and has no two linearly dependent columns. Thus linear MDS $(n,2b+1)_q$ $b$-symbol codes exist only when $n\le \frac{q^{b+1}-1}{q-1}$.
  \begin{lemma}\label{3.3}
    A linear MDS $(n,2b+1)_q$ $b$-symbol code over $\mathbb{F}_{q}$ exists only when $2b+1\le n\le \frac{q^{b+1}-1}{q-1}$.
  \end{lemma}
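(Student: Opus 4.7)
The plan is to prove the two inequalities $2b+1 \leq n$ and $n \leq (q^{b+1}-1)/(q-1)$ separately. The lower bound is a Singleton-type consistency check, while the upper bound follows from the classical correspondence between linear codes of minimum Hamming distance at least $3$ and point sets in projective space, adapted to the $b$-symbol setting exactly as sketched in the paragraph immediately preceding the lemma.

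For the lower bound, since $wt_b(\mathbf{z}) \leq n$ for every $\mathbf{z} \in \mathbb{F}_q^n$, any two distinct codewords have $b$-distance at most $n$, so $d_b = 2b+1 \leq n$.

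For the upper bound, I would proceed in three steps. First, Proposition \ref{prop2} gives $wt_b(\mathbf{x}) \leq b \cdot wt_H(\mathbf{x})$ whenever $0 < wt_H(\mathbf{x}) \leq n-(b-1)$; since $n \geq 2b+1$, the hypothesis is met for $wt_H(\mathbf{x}) \in \{1,2\}$. Hence a nonzero codeword of Hamming weight $1$ or $2$ would have $b$-weight at most $2b < 2b+1 = d_b$, which rules such codewords out and forces the minimum Hamming distance of $\c$ to be at least $3$. Second, the MDS hypothesis combined with the Singleton bound forces $|\c| = q^{n-b-1}$, hence $\dim \c = n-b-1$, so a parity-check matrix $H$ of $\c$ has size $(b+1)\times n$. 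Third, the condition $d_H(\c) \geq 3$ is equivalent to any two columns of $H$ being linearly independent; equivalently, the $n$ columns, viewed as homogeneous coordinates, give pairwise distinct points of $PG(b,q)$. Since $|PG(b,q)| = (q^{b+1}-1)/(q-1)$, I conclude $n \leq (q^{b+1}-1)/(q-1)$.

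No serious obstacle stands in the way: the entire argument is a routine translation of the MDS/projective-geometry dictionary into the $b$-symbol vocabulary, and is essentially already outlined in the paragraph preceding the lemma. The only points worth verifying with care are the applicability of Proposition \ref{prop2} (the hypothesis $wt_H(\mathbf{x}) \leq n-(b-1)$) and the fact that $\dim \c = n-b-1$ is a well-defined positive integer, both of which are guaranteed by $n \geq 2b+1$.
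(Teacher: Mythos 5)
Your proposal is correct and follows essentially the same route as the paper, which proves this lemma in the paragraph immediately preceding its statement: a nonzero entry contributes at most $b$ to the $b$-weight, so the minimum Hamming distance is at least $3$; the MDS condition forces a $(b+1)\times n$ parity-check matrix with pairwise linearly independent columns, i.e.\ $n$ distinct points of $PG(b,q)$, whence $n\le \frac{q^{b+1}-1}{q-1}$. Your extra care with the lower bound $2b+1\le n$ and with the hypothesis of Proposition~\ref{prop2} only makes the argument more complete.
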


\subsection{$b=2$}\label{sb1}
A projective plane $PG(2,q)$ is an incidence system of points and lines such that
\begin{itemize}
  \item For any two distinct points, there  is exactly one line through both.
  \item Any two distinct lines meet in exactly one point.
  \item There exist four points such that no three are collinear.
\end{itemize}
From Lemma \ref{lem1} we know that all the points in $PG(2,q)$ lie on $q+1$ lines, all of which intersect in a point, just as shown in Figure \ref{fig1}.
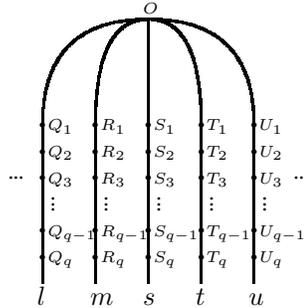
\begin{figure}[h]
\setlength{\unitlength}{1pt}
\centering
\begin{picture}(90,110)
\thicklines
{\tiny
\multiput(10,70)(20,0){5}{\line(0,-1){60}}
\put(50,110){\line(0,-1){40}}
\qbezier(50,110)(10,110)(10,70)
\qbezier(50,110)(30,110)(30,70)
\qbezier(50,110)(70,110)(70,70)
\qbezier(50,110)(90,110)(90,70)
\multiput(10,70)(20,0){5}{\circle*{1.5}}
\multiput(10,60)(20,0){5}{\circle*{1.5}}
\multiput(10,50)(20,0){5}{\circle*{1.5}}
\multiput(10,30)(20,0){5}{\circle*{1.5}}
\multiput(10,20)(20,0){5}{\circle*{1.5}}
\multiput(14,42)(20,0){5}{\circle*{1}}
\multiput(14,40)(20,0){5}{\circle*{1}}
\multiput(14,38)(20,0){5}{\circle*{1}}
\multiput(2,50)(-2,0){3}{\circle*{1}}
\multiput(106,50)(2,0){3}{\circle*{1}}

\put(48,112){$O$}
\put(12,68){$Q_{1}$}\put(32,68){$R_{1}$}\put(52,68){$S_{1}$}\put(72,68){$T_{1}$}\put(92,68){$U_{1}$}
\put(12,58){$Q_{2}$}\put(32,58){$R_{2}$}\put(52,58){$S_{2}$}\put(72,58){$T_{2}$}\put(92,58){$U_{2}$}
\put(12,48){$Q_{3}$}\put(32,48){$R_{3}$}\put(52,48){$S_{3}$}\put(72,48){$T_{3}$}\put(92,48){$U_{3}$}
\put(12,28){$Q_{q-1}$}\put(32,28){$R_{q-1}$}\put(52,28){$S_{q-1}$}\put(72,28){\tiny$T_{q-1}$}\put(92,28){$U_{q-1}$}
\put(12,18){$Q_{q}$}\put(32,18){$R_{q}$}\put(52,18){$S_{q}$}\put(72,18){$T_{q}$}\put(92,18){$U_{q}$}
\put(8,2){\small$l$}\put(28,2){\small$m$}\put(48,2){\small$s$}\put(68,2){\small$t$}\put(88,2){\small$u$}
}
\end{picture}
\caption {The structure of $PG(2,q)$\label{fig1}.}
\end{figure}

\begin{lemma}\label{thmb2}
  There exist $n$ ordered points in $PG(2,q)$ such that no three cyclically consecutive points are collinear for $q\ge 3$ being a prime power and $3\le n \le q^2+q+1$.
\end{lemma}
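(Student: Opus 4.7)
The plan is to build an explicit cyclic ordering using the decomposition from Lemma \ref{lem1}: write $PG(2,q)=\{O\}\cup L_0^{\ast}\cup\cdots\cup L_q^{\ast}$, where $L_0,\ldots,L_q$ are the $q+1$ concurrent lines through a common point $O$ and $L_i^{\ast}=L_i\setminus\{O\}$ contains $q$ points. Labelling the points of each $L_i^{\ast}$ as $P_{i,1},\ldots,P_{i,q}$, I would attempt a cyclic-shift traversal of the $q+1$ columns, row by row, with $O$ inserted at a single suitable location in the cycle.

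Under such an ordering the consecutive triples fall into three types. First, triples containing $O$: non-collinear, because the line through $O$ and any $P_{i,j}$ is $L_i$ itself, and the two neighbours of $O$ can be arranged to sit on different $L_{i'}$. Second, triples with two points on a common $L_i$ and a third on some $L_{i'}\neq L_i$: the line through the two points on $L_i$ is $L_i$, so the third point is off this line. Third, and most delicate, triples whose three points lie on three distinct concurrent lines: such a triple can still be collinear along one of the $q^2$ lines of $PG(2,q)$ that do not pass through $O$, so one must choose the labelling in suitable projective coordinates (for instance, using coordinates $O=(0\!:\!0\!:\!1)$, $L_i=\{(1\!:\!\alpha_i\!:\!\ast)\}$, $L_q=\{(0\!:\!1\!:\!\ast)\}$ with the third coordinate of $P_{i,j}$ a nonlinear function of $\alpha_i$, such as $\alpha_i^2+\beta_j$) so that the relevant row-wise $3\times 3$ determinants become Vandermonde-type and are nonzero. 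For the full length $n=q^2+q+1$ this produces a cyclic ordering of all the points; for $3\le n<q^2+q+1$ I would truncate the sequence by dropping a consecutive block of points and verify that the two new wrap-around triples remain non-collinear, using the flexibility guaranteed by $q\ge 3$, while the very smallest cases can be handled by a direct choice of points in general position.

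The principal obstacle is the third class of triples: three points drawn from three distinct concurrent $L_i$ are collinear precisely when they lie on a common line not through $O$, and there are $q^2$ such lines, so collinearity is not a rare accident and cannot be defeated by the concurrent-line structure alone. Ruling it out uniformly forces a careful coordinate-based labelling, together with a small separate analysis for the distinguished line $L_q$ and for the row-transition triples produced by the cyclic shift. The secondary difficulty is the cyclic closure when $n<q^2+q+1$: a naive insertion induction is too weak near the extremal value (each insertion forbids up to $\sim 3q$ points), so a direct truncation of the global sequence plus a case check on the two wrap-around triples is what makes the construction work uniformly across all $n$ in the stated range.
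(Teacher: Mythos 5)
Your traversal order is the source of all your difficulties, and the difficulty you correctly identify as ``the principal obstacle'' is not actually resolved in the proposal. In a row-by-row sweep across all $q+1$ concurrent lines, \emph{almost every} interior consecutive triple $(P_{i,j},P_{i+1,j},P_{i+2,j})$ has its three points on three distinct lines through $O$, so you must defeat collinearity for $\Theta(q^2)$ dangerous triples. Your conic-type labelling $P_{i,j}=(1:\alpha_i:\alpha_i^2+\beta_j)$ does kill the same-row triples (the determinant reduces to a Vandermonde when $\beta_j=\beta_{j'}=\beta_{j''}$), but for mixed-row triples the determinant is $V(\alpha_i,\alpha_{i'},\alpha_{i''})+\bigl(\beta_j(\alpha_{i''}-\alpha_{i'})-\beta_{j'}(\alpha_{i''}-\alpha_i)+\beta_{j''}(\alpha_{i'}-\alpha_i)\bigr)$, whose nonvanishing is a genuine condition on the $\beta$'s that you neither impose nor verify; the line $L_q=\{(0:1:*)\}$ escapes the parametrization entirely; and the truncation step for $n<q^2+q+1$ is asserted rather than checked. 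As written, the argument is a plausible programme, not a proof.

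The paper's proof shows this entire class of triples can be avoided rather than defeated: instead of sweeping all $q+1$ lines per row, it interleaves the points of just \emph{two} lines $l,m$ through $O$ at a time, producing $O,Q_1,R_1,Q_2,R_2,\dots,Q_q,R_q$, then moves on to the next pair of lines. Under this ordering every interior consecutive triple automatically contains two points on a common line through $O$ and a third point on a different line through $O$, so non-collinearity is immediate from $l\cap m=\{O\}$ --- no coordinates, conics, or determinants are needed. Only the $O(1)$ transition triples between successive pairs of lines and the wrap-around triples require ad hoc choices (pick the first point of each new line off one forbidden line, and patch the last point), and the only structural case split is on the parity of $q+1$ (for $q$ even one triple of lines must be handled together before pairing the rest). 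If you want to salvage your route, the missing ingredient is a concrete choice of $\alpha_i,\beta_j$ and shift pattern making all mixed-row determinants nonzero, plus a separate treatment of $L_q$; but the two-lines-at-a-time interleaving makes all of that unnecessary.
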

\begin{proof}
  Let the notations be as in Figure \ref{fig1}. There are many approaches to attain this goal and we give one of the strategies as follows.

    \noindent{\scriptsize $\bullet$} The case when $q$ is odd.

    In this case, through $O$ we have an even number of lines. Choose $O$ to be the first point, and then choose arbitrary
    points from lines $l$ and $m$ in turn. Suppose we have ordered the points as $O,Q_{1},R_{1},Q_{2},R_{2},\cdots,Q_{q},R_{q}$. Next we choose a point $S_{1}$ not on the line $Q_{q}R_{q}$ to be the next, and then a point $T_{1}$ not on the line $R_{q}S_{1}$. After that, choose points from lines $s$ and $t$ in turn. We can keep doing this until we have covered $n$ ($3\le n\le q^2+q+1$) points.

     Note that we have ordered $n$ points in $PG(2,q)$  and it is easy to check that no three consecutive points are collinear.  Denote the last three points as $\P_{n-3},\P_{n-2},\P_{n-1}$. We further need to make sure that $\P_{n-2}$, $\P_{n-1}$, $O$  are not collinear, neither are $\P_{n-1}$, $O$, $Q_{1}$. Since $\P_{n-1}$ is always not lying on the line $O\P_{n-2}$, we have $\P_{n-2}$, $\P_{n-1}$, $O$ are not collinear. Points $\P_{n-1}$, $O$, $Q_{1}$ may be collinear when $\P_{n-1}$ lies on the line $l$, i.e., $4\le n \le 2q$ and $n$ is even. If this happens we choose another point not lying on  lines $l,m$ and $\P_{n-3}\P_{n-2}$ to be the new last point, which can always succeed.

    \noindent{\scriptsize $\bullet$} The case when $q$ is even.

    This case is different from the case when $q$ is odd since there are an odd number of lines through $O$. For $n\le q^2+1$, we can choose an even number of lines and order the points on them just as what we do in the case when $q$ is odd. For $n> q^2+1$, we first put the points on the lines $l,m,s$ in order and then we can just proceed as in the case when $q$ is odd. Similarly we choose $O$ to be the first point, and then choose points from lines $l,m,s$ in turn, making sure that no three consecutive points are collinear. Since two lines meet in exactly one point, we can always do this until there is only one point left on each line. Suppose we have ordered the points as $O,Q_{1},R_{1},S_{1},Q_{2},\cdots,Q_{q-1},R_{q-1},S_{q-1}$. Choose $R_{q},S_{q}$ to be the next two points, after that, choose a point $T_{1}$ not on lines $R_{q}S_{q}$ and $Q_{q}S_{q}$ to be the next. Let the point $Q_{q}$ be the next, and then choose a point $U_{1}$ not on $Q_{q}T_{1}$, a point $T_{2}$ not on $Q_{q}U_{1}$. So far, we have ordered the points as $O,Q_{1},R_{1},S_{1},\cdots,Q_{q-1},R_{q-1},S_{q-1}R_{q},S_{q},T_{1},Q_{q},U_{1},T_{2}$ and no three consecutive points are collinear. There are an even number of lines left and we can then simply proceed as in the case when $q$ is odd.
\end{proof}
Note that in the lemma above, we exclude the case $q=2$. We show this in the following example.

\begin{example}\label{exam}
  We can order $3\le n\le 7$ points in $PG(2,2)$ such that any $3$ cyclically consecutive points are not collinear as shown in Table {\rm \ref{tab1}}, where the column vectors of $H$ denote the points.
  \begin{table}[h] \centering
  {\scriptsize
  \begin{tabular}{|c|c|}
    \hline
    $n$ & $H$ \\\hline
    $3$ & $\left(
             \begin{array}{ccc}
               0 & 1 & 1 \\
               1 & 1 & 0 \\
               0 & 0 & 1 \\
             \end{array}
           \right)$\\\hline
    $4$ & $\left(
             \begin{array}{cccc}
               0 & 1 & 1 &0\\
               1 & 1 & 0 &0\\
               0 & 0 & 1 &1\\
             \end{array}
           \right)$ \\\hline
    $5$ & $\left(
             \begin{array}{ccccc}
               0 & 1 & 1 &1&0\\
               1 & 1 & 0 &0&1\\
               0 & 0 & 1 &0&1\\
             \end{array}
           \right)$ \\\hline
    $6$ & $\left(
             \begin{array}{cccccc}
               0 & 1 & 1 &1&1&0\\
               1 & 1 & 0 &0&1&0\\
               0 & 0 & 1 &0&1&1\\
             \end{array}
           \right)$ \\\hline
    $7$ & $\left(
             \begin{array}{ccccccc}
               0 & 1 & 1 &1&0&0&1\\
               1 & 1 & 0 &0&1&0&1\\
               0 & 0 & 1 &0&1&1&1\\
             \end{array}
           \right)$ \\
    \hline
  \end{tabular}\caption{Ordered points in $PG(2,2)$.}\label{tab1}
  }\end{table}
\end{example}

%

\begin{remark}
 We have ordered $n$ points in $PG(2,q)$ for $q$ being a prime power and $3\le n\le q^2+q+1$. Actually, we can obtain MDS $(n,5)_{q}$ $2$-symbol codes for $q$ being a prime power with length $n$ ranging from $5$ to $q^2+q+1$ according to Lemma {\rm \ref{Thm1}}. We do not present this as a theorem since the result has been covered by our former paper {\rm \cite{DGZZZ}}. Besides providing a new proof of the result, the discussion in Lemma {\rm \ref{thmb2}} will also help in the proof of Lemma {\rm \ref{thmb4}}.
\end{remark}
\subsection{$b=3,d_{3}=7$}\label{sb2}
First, we collect some axioms for projective $3$-space, in which the objects (points, lines and planes) and the incidence relations are given:
\begin{itemize}
  \item Any two distinct points are incident with exactly one line.
  \item Any two distinct planes meet in exactly one line.
  \item Given any plane $\pi$ and any line $l$ not on $\pi$, there exists a unique point incident with both.
  \item Every plane incident with a given line $l$ is also incident with every point on $l$.
  \item Any two distinct lines meet in a point, if and only if they lie on a common plane.
  \item There exists a set of five points, of which no four lie on a common plane.
\end{itemize}

From Lemma \ref{lem1} we know that all the points in $PG(3,q)$ lie on $q+1$ planes, all of which intersect in a line, just as shown in Figure \ref{fig2}. For example, lines $l,l_{1},\cdots,l_{q}$ form a  plane, lines $l,m_{1},\cdots,m_{q}$ form another and the two planes share a common line $l$.
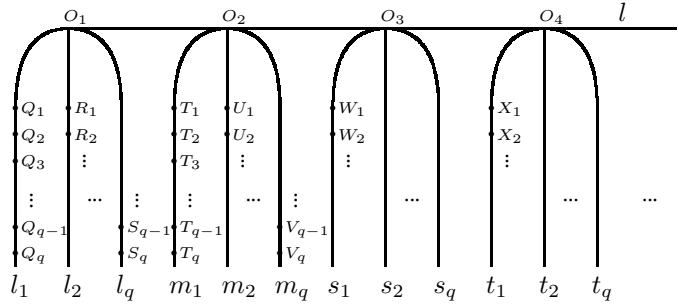
\begin{figure}[h]
\setlength{\unitlength}{1pt}
\centering
\begin{picture}(250,110)
\thicklines
{\tiny
\multiput(0,70)(20,0){12}{\line(0,-1){60}}
\multiput(20,100)(60,0){4}{\line(0,-1){30}}
\qbezier(20,100)(0,100)(0,70)
\qbezier(20,100)(40,100)(40,70)
\qbezier(80,100)(60,100)(60,70)
\qbezier(80,100)(100,100)(100,70)
\qbezier(140,100)(120,100)(120,70)
\qbezier(140,100)(160,100)(160,70)
\qbezier(200,100)(180,100)(180,70)
\qbezier(200,100)(220,100)(220,70)
\put(20,100){\line(1,0){230}}
\multiput(0,70)(60,0){4}{\circle*{1.5}}
\multiput(0,60)(60,0){4}{\circle*{1.5}}
\multiput(0,50)(60,0){2}{\circle*{1.5}}
\multiput(0,15)(60,0){2}{\circle*{1.5}}
\multiput(0,25)(60,0){2}{\circle*{1.5}}
\multiput(40,15)(60,0){2}{\circle*{1.5}}
\multiput(40,25)(60,0){2}{\circle*{1.5}}
\multiput(20,70)(60,0){2}{\circle*{1.5}}
\multiput(20,60)(60,0){2}{\circle*{1.5}}
\put(18,103){$O_{1}$}\put(78,103){$O_{2}$}\put(138,103){$O_{3}$}\put(198,103){$O_{4}$}\put(228,103){\small$l$}
\put(2,68){$Q_{1}$}\put(62,68){$T_{1}$}\put(122,68){$W_{1}$}\put(182,68){$X_{1}$}
\put(2,58){$Q_{2}$}\put(62,58){$T_{2}$}\put(122,58){$W_{2}$}\put(182,58){$X_{2}$}
\put(2,48){$Q_{3}$}\put(62,48){$T_{3}$}
\put(2,13){$Q_{q}$}\put(62,13){$T_{q}$}
\put(2,23){$Q_{q-1}$}\put(62,23){$T_{q-1}$}
\put(22,68){$R_{1}$}\put(82,68){$U_{1}$}
\put(22,58){$R_{2}$}\put(82,58){$U_{2}$}
\put(42,13){$S_{q}$}\put(102,13){$V_{q}$}
\put(42,23){$S_{q-1}$}\put(102,23){$V_{q-1}$}
\put(-2,0){\small $l_{1}$}\put(18,0){\small $l_{2}$}\put(38,0){\small $l_{q}$}
\put(58,0){\small $m_{1}$}\put(78,0){\small $m_{2}$}\put(98,0){\small $m_{q}$}
\put(118,0){\small $s_{1}$}\put(138,0){\small $s_{2}$}\put(158,0){\small $s_{q}$}
\put(178,0){\small $t_{1}$}\put(198,0){\small $t_{2}$}\put(218,0){\small $t_{q}$}
\multiput(6,37)(0,-2){3}{\circle*{1}}
\multiput(66,37)(0,-2){3}{\circle*{1}}
\multiput(28,35)(2,0){3}{\circle*{1}}
\multiput(88,35)(2,0){3}{\circle*{1}}
\multiput(148,35)(2,0){3}{\circle*{1}}
\multiput(208,35)(2,0){3}{\circle*{1}}
\multiput(238,35)(2,0){3}{\circle*{1}}
\multiput(26,52)(0,-2){3}{\circle*{1}}
\multiput(86,52)(0,-2){3}{\circle*{1}}
\multiput(126,52)(0,-2){3}{\circle*{1}}
\multiput(186,52)(0,-2){3}{\circle*{1}}
\multiput(46,37)(0,-2){3}{\circle*{1}}
\multiput(106,37)(0,-2){3}{\circle*{1}}
}
\end{picture}
\caption {The structure of $PG(3,q)$\label{fig2}.}
\end{figure}
\begin{lemma}\label{thmb3}
  There exist $n$ ordered points in $PG(3,q)$ such that no four cyclically consecutive points lie on a plane for $q\ge 3$ being a prime power and $4\le n \le q^{3}+q^2+q+1$.
\end{lemma}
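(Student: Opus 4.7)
The plan is to generalize the ordering strategy of Lemma \ref{thmb2} one dimension up. By Lemma \ref{lem1}, the points of $PG(3,q)$ are covered by $q+1$ planes $\Pi_0,\Pi_1,\ldots,\Pi_q$, all containing a common line $l$, and these planes partition the off-$l$ points. The labelling in Figure \ref{fig2} already suggests the natural structure: each plane $\Pi_i$ is a copy of $PG(2,q)$ in which $l$ plays the role of one distinguished line, and through each point $O_i$ of $l$ there are $q$ other lines of $\Pi_i$. The crucial elementary observation is that four points in $PG(3,q)$ are coplanar iff they lie in a common projective plane, so the ordering must never place four cyclically consecutive points inside a single $\Pi_i$, and must also avoid four consecutive points collectively lying in some other plane that is not one of the $\Pi_i$.

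First I would set up the block rhythm: read out points in runs of at most three from one plane before moving to the next. A run of three points from $\Pi_i$ followed by any point taken from $\Pi_j\setminus l$ with $j\neq i$ is automatically non-coplanar, because the fourth point lies outside the plane spanned by the first three. This reduces the problem to (a) ordering the points within each $\Pi_i$ in triples so that successive triples inside $\Pi_i$ remain tied to different lines through $O_i$ (exactly the planar construction of Lemma \ref{thmb2} applied inside $\Pi_i$ to the lines through $O_i$), and (b) analysing the four- and five-point windows that straddle the junction between two planes.

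The transition analysis is the core of the argument. Suppose four cyclically consecutive points split as $p_1,p_2\in\Pi_i$ and $p_3,p_4\in\Pi_j$ with $i\neq j$. Since $\Pi_i\cap\Pi_j=l$, the four points lie in a common plane iff the line $\overline{p_1p_2}\subset\Pi_i$ and the line $\overline{p_3p_4}\subset\Pi_j$ meet on $l$. One therefore chooses, at each transition, the last two points of $\Pi_i$ and the first two of $\Pi_j$ so that the corresponding two lines of $\Pi_i$ and $\Pi_j$ through the $O_i$-pencils are not matched in a way that produces a common intersection on $l$; concretely, one picks the first point entering $\Pi_j$ to lie off $l$ and on a line of $\Pi_j$ whose trace on $l$ differs from that of $\overline{p_1p_2}$. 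The $3{+}1$ and $1{+}3$ splits of a transition window are disposed of immediately: a $\Pi_i$-triple together with a point of $\Pi_j\setminus l$ spans a solid, and the symmetric case is coplanar only if the single $\Pi_i$-point lies on $l$, which is excluded by never placing $l$-points at a transition.

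What remains is then entirely parallel to Lemma \ref{thmb2}. I would carry out a case analysis on the parity of $q$ (number of planes $\Pi_i$ through $l$), build up the ordering plane by plane for every $n$ in the range $4\le n\le q^3+q^2+q+1$, and fix the cyclic wrap-around by, if necessary, swapping the last point for another point in $\Pi_q$ that avoids coplanarity with the first three points; because $q\ge 3$ there is always a spare point available. The main obstacle is the transition analysis in the middle paragraph above: one must verify in each small case (short runs, last incomplete triple, wrap-around) that no four consecutive points inadvertently lie in some ambient plane $\Pi\neq \Pi_i,\Pi_j$, and this is where the $q\ge 3$ hypothesis is genuinely used to guarantee enough room to make the required line choices inside each $\Pi_i\cong PG(2,q)$.
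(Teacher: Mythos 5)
Your geometric criteria are correct and the plan is workable, but it is organized quite differently from the paper's proof, and it is worth seeing what that difference costs. The paper does not take runs of three from a whole plane. It pairs up the $q+1$ planes through $l$, and within a pair $(\pi_0,\pi_1)$ it fixes one line $l_1\subset\pi_0$ through $O_1\in l$ and one line $m_1\subset\pi_1$ through $O_2\in l$; since $l_1$ and $m_1$ meet $l$ in different points they are skew, so alternating points one at a time along them forces every window of four consecutive points to contain two points of each line and hence to span the solid, with no side condition left to maintain. Only the junctions between successive line-pairs and the cyclic wrap-around need ad hoc repairs (pick the next point off an explicitly named plane), and the parity discussion concerns the number of planes. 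Your scheme rests on the same underlying fact --- two lines in distinct planes of the pencil are coplanar exactly when their traces on $l$ coincide --- but applies it to the two-point segments at each plane-to-plane transition rather than to two fixed skew lines, so the ``different traces on $l$'' condition becomes a running constraint that must be re-verified at every one of the many transitions, instead of being automatic along long stretches. Your $3{+}1$ and $1{+}3$ analyses are also correct as stated.

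Three items you defer are genuinely nontrivial and are precisely where your organization is harder than the paper's. First, each plane has $q^2$ points off $l$, which need not be a multiple of $3$, so short runs are unavoidable and create windows of four consecutive points meeting \emph{three} distinct planes; none of your transition criteria covers these. Second, the $q+1$ points of $l$ lie in every plane, so ``never placing $l$-points at a transition'' forces them strictly inside runs, and the $2{+}2$ trace condition must then be rechecked with one of the four points on $l$. Third, as a plane is revisited its unused points are depleted, and the assertion that one can always begin the next run on a line avoiding the forbidden trace point needs a counting argument near the end. The paper compresses details of the same flavor, but its skew-line alternation eliminates the first two issues by construction: the points of $l$ are consumed two at a time at the head of each plane-pair, and every generic window meets exactly two lines.
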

\begin{proof}
 Fix a line $l$ and denote the $q+1$ planes intersecting in $l$ as $\pi_{0},\cdots,\pi_{q}$. In Figure \ref{fig2} we present four of them, and denote the plane corresponding to lines $l,l_{1},l_{2},\cdots,l_{q}$ as $\pi_{0}$, and the next three planes as $\pi_{1},\pi_{2},\pi_{3}$. We give one of the strategies as follows.
%

    \noindent{\scriptsize $\bullet$} The case when $q$ is odd.

    In this case, we have an even number of planes $\pi_{0},\cdots,\pi_{q}$ sharing the line $l$. Choose $O_{1}$, $O_{2}$ to be the first two points and then choose arbitrary points from lines $l_{1}$,$m_{1}$ in turn. Suppose we have ordered the points as $O_{1},O_{2},Q_{1},T_{1},Q_{2},\cdots,T_{q-1},Q_{q},T_{q}$. It is obvious that $O_{1},O_{2},Q_{1},T_{1}$ do not lie on a plane, neither do $O_{2},Q_{1},T_{1},Q_{2}$. For any other four consecutive points, we must have two of the points lying on $l_{1}$ and two on $m_{1}$. If they lie on a plane then the lines $l_{1}$ and $m_{1}$ must intersect. Suppose they meet in a point $O'$, then $O'$ lies on both $\pi_{0}$ and $\pi_{1}$, and thus on $l$, a contradiction. Therefore,  any four consecutive points do not lie on a plane.

    Next, we choose a point $R_{1}$ not on the plane $Q_{q}T_{q-1}T_{q}$, a point $U_{1}$ not on the plane $Q_{q}T_{q}R_{1}$, and a point $R_{2}$ not on the plane $T_{q}R_{1}U_{1}$. Then choose points from $l_{2},m_{2}$ in turn and we can proceed as above until all the points on lines $l_{1},\cdots,l_{q},m_{1},\cdots,m_{q}$ are covered.

    Suppose we have ordered the points as $O_{1},O_{2},Q_{1},T_{1},\cdots,S_{q-1},V_{q-1},S_{q},V_{q}$. Then we choose the following points to be $O_{3},O_{4},W_{1},X_{1},W_{2},X_{2}$, where $W_{1},W_{2}$ are arbitrary points on $s_{1}$ and  $X_{1},X_{2}$ are arbitrary points on $t_{1}$. It is easy to check that any four consecutive points are not on a plane. Repeat the procedure until we have covered $n$ ($4\le n\le q^3+q^2+q+1$) points in $PG(3,q)$.

    Note that we have ordered $n$ points in $PG(3,q)$ and no four consecutive points lie on a plane. Denote the last four points as $\P_{n-4},\P_{n-3},\P_{n-2}$ and $\P_{n-1}$, we further need to make sure:
    \begin{enumerate}
      \item[(1)] $\P_{n-1},O_{1},O_{2},Q_{1}$ do not lie on a plane.

      This fails only when $\P_{n-1}$ lies on $\pi_{0}$, i.e., $5\le n\le 2q^{2}+1$ and $n$ is odd. In this case, we choose a point not lying on planes $\pi_{0}$$,$ $\pi_{1}$$,$ $\P_{n-4}\P_{n-3}\P_{n-2}$ and $\P_{n-3}\P_{n-2}O_{1}$ to be the new last point $\P_{n-1}$.

      \item[(2)] $\P_{n-2},\P_{n-1},O_{1},O_{2}$ do not lie on a plane.

      This is always true in our construction, since $\P_{n-2},\P_{n-1}$ are always on different $\pi_{i}$s.
      \item[(3)] $\P_{n-3},\P_{n-2},\P_{n-1},O_{1}$ do not lie on a plane.

       If $\P_{n-3},\P_{n-1}$ lie on a line $l_{i}$, $1\le i\le q$, then we can fix this as what we do in case $(1)$. Otherwise, $\P_{n-3},\P_{n-2},\P_{n-1},O_{1}$ may lie on a plane only if they are chosen from different lines. For example, $T_{q},R_{1},U_{1}$ are chosen from lines $m_{1},l_{2},m_{2}$ respectively. In this case, we can always find a new suitable point $\P_{n-1}$ since there are enough points remaining.
    \end{enumerate}

        \noindent{\scriptsize $\bullet$} The case when $q$ is even.

    This case is different from the case when $q$ is odd since there are an odd number of planes. For $n\le q^{3}+q$, we can choose an even number of planes and proceed just as in the case when $q$ is odd. For $n>q^{3}+q$, we first order the points on the lines $l_{1},\cdots,l_{q},m_{1},\cdots,m_{q},s_{1},\cdots,s_{q}$ and then proceed  as in the case when $q$ is odd. Note that there are $3q$ lines, an even number, thus we can still consider the lines from different $\pi_{i}$s in pairs, $i=0,1,2$, and order the points as in the case when $q$ is odd. There are an even number of planes remaining. After a similar discussion, we can order $n$ points such that no four cyclically consecutive points are on a plane for $4\le n \le q^3+q^2+q+1$.
\end{proof}

\begin{example}\label{ex1}
  We can also order $4\le n\le 15$ points in $PG(3,2)$ such that any $4$ cyclically consecutive points do not lie on a plane as shown in Table {\rm \ref{tab2}}, where the first $n$ columns of $H$ denote the ordered $n$ points.
\begin{table}[h]  \centering
{\scriptsize  \begin{tabular}{|c|c|}
    \hline
    $n$ & $H$ \\\hline
    $5$ & $\left(
             \begin{array}{ccccc}
             1&1&0&1&1\\
             0&0&0&1&1\\
             1&0&1&0&0\\
             0&0&1&1&0\\
             \end{array}
           \right)$\\\hline
    $7$ & $\left(
             \begin{array}{ccccccc}
             0&0&0&1&0&1&1\\
             1&0&0&0&1&0&0\\
             0&1&0&0&0&1&0\\
             0&0&1&1&1&1&0\\
             \end{array}
           \right)$\\\hline
    $8$ & $\left(
             \begin{array}{cccccccc}
              1&1&1&0&1&0&0&0\\
              0&1&0&0&1&1&0&1\\
              1&0&1&1&0&1&0&0\\
              1&1&0&0&0&1&1&0\\
             \end{array}
           \right)$ \\\hline
    $10$ & $\left(
             \begin{array}{cccccccccc}
              1&0&1&1&0&0&1&1&1&0\\
              0&1&1&1&1&0&0&1&0&1\\
              0&1&0&0&0&1&1&1&1&1\\
              1&0&1&0&0&1&0&1&1&1\\
             \end{array}
           \right)$ \\\hline
    $13$ & $\left(
             \begin{array}{ccccccccccccc}
             1&1&0&1&0&0&1&0&1&1&0&0&0\\
             1&0&0&0&1&1&0&1&1&1&1&0&0\\
             1&1&0&0&0&1&1&0&0&1&1&1&1\\
             0&0&1&1&0&1&1&1&1&1&0&1&0\\
             \end{array}
           \right)$ \\\hline
    $4,6,9,11,12,14,15$ & $\left(
             \begin{array}{ccccccccccccccc}
             0&0&0&1&0&1&0&0&1&0&1&1&1&1&1\\
             1&0&0&0&1&0&1&1&0&0&1&1&0&1&1\\
             0&1&0&0&0&1&1&1&0&1&1&1&1&0&0\\
             0&0&1&1&1&1&0&1&0&1&0&1&0&1&0\\
             \end{array}
           \right)$ \\
    \hline
  \end{tabular}\caption{Ordered points in $PG(3,2)$.}\label{tab2}
  }\end{table}
\end{example}
Combining Lemmas \ref{Thm1}, \ref{thmb3} and Example \ref{ex1}, we have the following theorem.
\begin{theorem}
  There exists a linear MDS $(n,7)_{q}$ $3$-symbol code for $q$ being a prime power with length $n$ ranging from $7$ to $q^3+q^2+q+1$.
\end{theorem}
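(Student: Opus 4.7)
The theorem is essentially a packaging of the three preceding ingredients, so my plan is to assemble them rather than introduce new constructions. The plan is to invoke Lemma \ref{Thm1} with $b=3$, which reduces the existence of a linear MDS $(n,7)_{q}$ $3$-symbol code to exhibiting an ordered set $\mathcal{S}$ of $n$ points in $PG(3,q)$ satisfying two geometric conditions: (i) three points of $\mathcal{S}$ lie on a common line, and (ii) no four cyclically consecutive points of $\mathcal{S}$ lie on a plane. Once these two conditions are verified for every pair $(q,n)$ in the desired range, the theorem follows immediately.

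For $q \ge 3$ a prime power and $7 \le n \le q^{3}+q^{2}+q+1$, condition (ii) is handed to us by Lemma \ref{thmb3}, which covers even the wider range $4 \le n \le q^{3}+q^{2}+q+1$. For condition (i), I would look back at the explicit ordering produced in the proof of Lemma \ref{thmb3}: the sequence begins $O_{1},O_{2},Q_{1},T_{1},Q_{2},T_{2},Q_{3},\ldots$, and the points $O_{1},Q_{1},Q_{2},Q_{3}$ all lie on the line $l_{1}$ of $PG(3,q)$. Since $n \ge 7$, these four points (and hence three collinear points) are present in $\mathcal{S}$, so condition (i) is satisfied automatically. No additional combinatorial work is needed in this case.

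For the remaining case $q=2$, Lemma \ref{thmb3} does not apply, but Example \ref{ex1} exhibits explicit matrices whose columns are ordered points of $PG(3,2)$ with no four cyclically consecutive points coplanar, for every $4 \le n \le 15 = q^{3}+q^{2}+q+1$. Restricting to $n \ge 7$ gives the needed range. Condition (i) can be checked directly from the tabulated matrices: each of them contains a triple of columns lying on a common projective line (i.e.\ a triple of linearly dependent columns), which is a short finite verification.

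Since all three ingredients fit together and cover the full range of $(q,n)$ stipulated in the theorem, the statement follows by applying Lemma \ref{Thm1} with $b=3$ in each case. I do not foresee a genuine obstacle: the real content has already been absorbed into Lemma \ref{thmb3} and Example \ref{ex1}, and the proof is a short assembly plus the observation about collinear points in the ordering used in Lemma \ref{thmb3}.
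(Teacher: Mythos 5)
Your proposal is correct and follows essentially the same route as the paper, which proves the theorem simply by combining Lemma \ref{Thm1}, Lemma \ref{thmb3} and Example \ref{ex1}. You in fact supply slightly more detail than the paper does, by explicitly verifying the collinearity condition (i) via the points $O_{1},Q_{1},Q_{2},Q_{3}$ on $l_{1}$ and by noting the finite check for $q=2$.
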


\subsection{$b=4,d_{4}=9$}\label{sb3}
From Lemma \ref{lem1} we know that all the points in $PG(4,q)$ lie in $(q+1)$ projective $3$-spaces, all of which intersect in a plane, just as shown in Figure \ref{fig3}. Lines $l_{0},l_{1},\cdots,l_{q}$ intersect in a point $O$ and form a plane $\pi$. Similarly, the sets of lines $\lbrace m_{11},m_{12},\cdots,m_{1q}\rbrace$, $\lbrace m_{21},m_{22},\cdots,m_{2q}\rbrace$,$\cdots$,$\lbrace m_{q1},m_{q2},\cdots,m_{qq}\rbrace$ form planes $\pi_{01},\pi_{02},\cdots,\pi_{0q}$ respectively. Planes $\pi,\pi_{01},\cdots,\pi_{0q}$ intersect in the line $l_{0}$ and they together form the projective $3$-space $V_{0}$. We totally have $q+1$ such projective $3$-spaces, denoted as $V_{0},V_{1},\cdots,V_{q}$, all of which form the projective space $PG(4,q)$ and intersect in the plane $\pi$.

Note that in $PG(2,q)$ we order points such that no three cyclically consecutive points are collinear. To attain this goal, we choose points from different lines by an interleaving technique. After that, in $PG(3,q)$, we order points such that no four cyclically consecutive points lie on a plane by choosing points from pairs of skew lines (lines do not intersect) alternatively and using the axiom that two lines on a projective plane must intersect.

Ordering points in $PG(4,q)$ such that no five cyclically consecutive points are in a projective $3$-space is more complicated. We only show the main idea in the proof of the following lemma.
\begin{figure}[h]
\setlength{\unitlength}{0.8pt}
\centering
\begin{picture}(250,300)
\thicklines
{\tiny
\multiput(100,105)(0,95){3}{\line(1,0){140}}
\put(5,200){\line(1,0){95}}
\qbezier(100,105)(5,105)(5,200)
\qbezier(100,295)(5,295)(5,200)

\multiput(100,75)(20,0){9}{\line(0,-1){45}}\multiput(120,105)(60,0){3}{\line(0,-1){30}}
\qbezier(120,105)(100,105)(100,75)\qbezier(120,105)(140,105)(140,75)
\qbezier(180,105)(160,105)(160,75)\qbezier(180,105)(200,105)(200,75)
\qbezier(240,105)(220,105)(220,75)\qbezier(240,105)(260,105)(260,75)

\multiput(100,170)(20,0){9}{\line(0,-1){45}}\multiput(120,200)(60,0){3}{\line(0,-1){30}}
\qbezier(120,200)(100,200)(100,170)\qbezier(120,200)(140,200)(140,170)
\qbezier(180,200)(160,200)(160,170)\qbezier(180,200)(200,200)(200,170)
\qbezier(240,200)(220,200)(220,170)\qbezier(240,200)(260,200)(260,170)

\multiput(100,265)(20,0){9}{\line(0,-1){45}}\multiput(120,295)(60,0){3}{\line(0,-1){30}}
\qbezier(120,295)(100,295)(100,265)\qbezier(120,295)(140,295)(140,265)
\qbezier(180,295)(160,295)(160,265)\qbezier(180,295)(200,295)(200,265)
\qbezier(240,295)(220,295)(220,265)\qbezier(240,295)(260,295)(260,265)
\multiput(128,50)(2,0){3}{\circle*{1}}
\multiput(188,50)(2,0){3}{\circle*{1}}
\multiput(248,50)(2,0){3}{\circle*{1}}\multiput(206,60)(4,0){3}{\circle*{1.5}}
\multiput(128,145)(2,0){3}{\circle*{1}}
\multiput(188,145)(2,0){3}{\circle*{1}}
\multiput(248,145)(2,0){3}{\circle*{1}}\multiput(206,155)(4,0){3}{\circle*{1.5}}
\multiput(128,240)(2,0){3}{\circle*{1}}
\multiput(188,240)(2,0){3}{\circle*{1}}
\multiput(248,240)(2,0){3}{\circle*{1}}\multiput(206,250)(4,0){3}{\circle*{1.5}}
\multiput(282,115)(0,-5){3}{\circle*{2}}
\put(85,108){\small $l_{q}$}\put(85,203){\small $l_{1}$}\put(85,298){\small $l_{0}$}
\put(278,60){\large $V_{q}$}\put(278,155){\large $V_{1}$}\put(278,250){\large $V_{0}$}
\put(98,25){ $s_{11}$}\put(118,25){ $s_{12}$}\put(138,25){$s_{1q}$}
\put(158,25){$s_{21}$}\put(178,25){ $s_{22}$}\put(198,25){ $s_{2q}$}
\put(218,25){$s_{q1}$}\put(238,25){ $s_{q2}$}\put(258,25){ $s_{qq}$}
\put(98,120){$n_{11}$}\put(118,120){ $n_{12}$}\put(138,120){ $n_{1q}$}
\put(158,120){$n_{21}$}\put(178,120){ $n_{22}$}\put(198,120){ $n_{2q}$}
\put(218,120){$n_{q1}$}\put(238,120){ $n_{q2}$}\put(258,120){ $n_{qq}$}
\put(98,215){$m_{11}$}\put(118,215){ $m_{12}$}\put(138,215){ $m_{1q}$}
\put(158,215){$m_{21}$}\put(178,215){ $m_{22}$}\put(198,215){ $m_{2q}$}
\put(218,215){$m_{q1}$}\put(238,215){ $m_{q2}$}\put(258,215){ $m_{qq}$}
\put(-4,197){$O$}
\put(117,109){$O_{q1}$}\put(177,109){$O_{q2}$}\put(237,109){$O_{qq}$}
\put(117,204){$O_{21}$}\put(177,204){$O_{22}$}\put(237,204){$O_{2q}$}
\put(117,299){$O_{11}$}\put(177,299){$O_{12}$}\put(237,299){$O_{1q}$}
}
\end{picture}
\caption {The structure of $PG(4,q)$\label{fig3}.}
\end{figure}
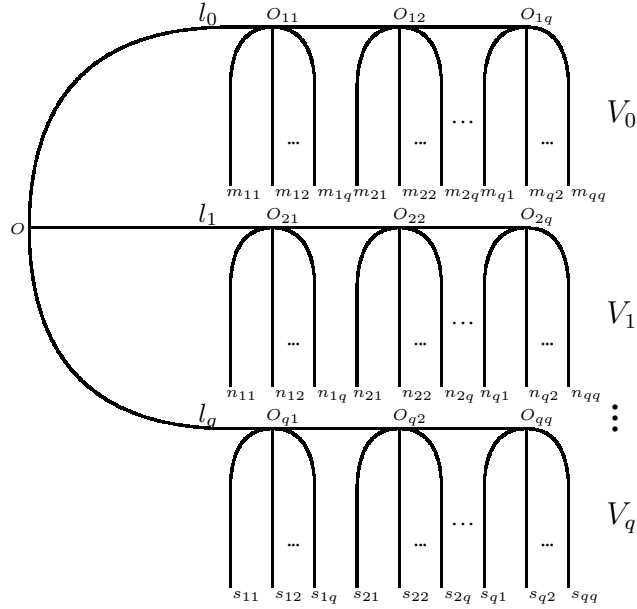

\begin{lemma}\label{thmb4}
    There exist $n$ ordered points in $PG(4,q)$ such that no five cyclically consecutive points are in a projective $3$-space for $q\ge 3$ being a prime power and $5\le n \le q^{4}+q^{3}+q^2+q+1$.
\end{lemma}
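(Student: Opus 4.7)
The plan is to iterate the hierarchical interleaving idea of Lemmas \ref{thmb2} and \ref{thmb3} one level further. Using Figure \ref{fig3}, I decompose $PG(4,q)$ into $q+1$ projective $3$-spaces $V_0,V_1,\ldots,V_q$ meeting in a common plane $\pi$. Inside each $V_i$ there are $q+1$ planes through a fixed line of $\pi$, and inside each such plane there are $q+1$ lines through a fixed point. The ordering I construct will respect this three-tier structure: within each plane the points follow a $PG(2,q)$-ordering from Lemma \ref{thmb2}; within each $V_i$ the planes are interleaved as in Lemma \ref{thmb3}; and across the $V_i$'s I add a further pairwise alternation.

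For $q$ odd (so $q+1$ is even) I would pair the $3$-spaces as $(V_0,V_1),(V_2,V_3),\ldots$ and inside each pair choose planes $\rho\subset V_{2i}$ and $\rho'\subset V_{2i+1}$ whose intersection is a single line in $\pi$. Alternating points from $\rho$ and $\rho'$ using the sub-orderings given by Lemma \ref{thmb2} yields a candidate block. The key geometric fact I would exploit is that any projective $3$-space containing three non-collinear points of $\rho$ must contain the whole plane $\rho$, so its further intersection with the neighbouring $V_{2i+1}$ is forced to be a specific line; ensuring by construction that consecutive $\rho'$-points avoid this forbidden line then prevents any five cyclically consecutive points from lying in a common $3$-space. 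After exhausting a pair $(V_{2i},V_{2i+1})$, I move to the next pair exactly as in Lemma \ref{thmb3}. For $q$ even, I handle the extra $V_i$ by a preparatory block interleaving three of its planes, analogous to the triple-line step in the even-$q$ case of Lemma \ref{thmb3}, then revert to the paired scheme.

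The main obstacle is the wrap-around: one must verify that the final four windows of five cyclically consecutive points do not lie in a common $3$-space, in the spirit of cases (1)--(3) of the proof of Lemma \ref{thmb3}. When a window fails, we replace the offending last point(s) by another point of $PG(4,q)$, and the existence of such a replacement reduces to a counting argument, where only a constant number of $3$-spaces must be avoided, each containing $\tfrac{q^4-1}{q-1}$ points, while the ambient space has $\tfrac{q^5-1}{q-1}$ points, so for $q\ge 3$ enough choices remain. I expect this boundary analysis, together with the intermediate transitions between pairs and the verification inside each pair that the interleaved plane ordering does not accidentally create a coplanar quintuple at the seam between $\rho,\rho'$ and the next pair, to be the most technical and error-prone part of the proof, just as it was in Lemma \ref{thmb3}.
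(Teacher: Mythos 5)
Your overall architecture is the paper's: decompose $PG(4,q)$ into the $3$-spaces $V_0,\ldots,V_q$ through the common plane $\pi$, order each chosen plane by a $PG(2,q)$-type ordering, interleave two planes taken from different $V_i$'s, and patch the wrap-around at the end. But the geometric heart of your construction is wrong. You pair planes $\rho\subset V_{2i}$ and $\rho'\subset V_{2i+1}$ ``whose intersection is a single line in $\pi$.'' Two distinct planes of $PG(4,q)$ meeting in a line span only a projective $3$-space $\langle\rho,\rho'\rangle$, so \emph{every} five consecutive points of your interleaved block would lie in that one $3$-space and the construction fails outright. The pairing works only when the two planes meet in a single point: in the paper's setup $\pi_{ij}\cap\pi_{st}\subseteq V_i\cap V_s=\pi$ for $i\ne s$, and the two planes meet $\pi$ in the distinct lines $l_i\ne l_s$, so $\pi_{ij}\cap\pi_{st}=\{O\}$ and $\langle\pi_{ij},\pi_{st}\rangle=PG(4,q)$. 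Your companion claim that a $3$-space $W\supset\rho$ meets $V_{2i+1}$ ``in a specific line'' has the same dimensional error (two distinct $3$-spaces of $PG(4,q)$ meet in a plane); the relevant object is $W\cap\rho'$, which in the correct point-intersection setup is always a line through $O$. Consequently the condition you must impose on each plane's sub-ordering is not only the Lemma \ref{thmb2} condition that no three consecutive points are collinear (which makes three consecutive same-plane points span their plane), but \emph{additionally} that no two consecutive points are collinear with $O$: otherwise the line spanned by the two interposed points of the other plane passes through $O\in\rho$, hence meets $\rho$, and the five points do lie in a common $3$-space. This extra requirement is not supplied by Lemma \ref{thmb2} as stated and must be engineered into the planar ordering; it is the decisive ingredient of the paper's proof and is absent from (indeed contradicted by) your setup.

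A secondary, repairable weakness is the final counting. Avoiding ``a constant number'' $c$ of $3$-spaces by comparing $c\cdot\frac{q^4-1}{q-1}$ with $\frac{q^5-1}{q-1}$ fails already for $q=3$ once $c\ge 4$ (e.g.\ $4\cdot 40=160>121$); you must subtract the pairwise intersections, which are planes, in the manner of the count in the proof of Theorem \ref{bound1}, and also exclude the points already placed in the sequence. As written, the union bound does not close the case $q=3$ that the lemma explicitly claims.
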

\begin{proof}
  Let the structure of $PG(4,q)$ be as shown in Figure \ref{fig3} and the notations be defined as above. Note that there are $q+1$ planes (including $\pi$) sharing a common line in each projective $3$-space $V_{i}$, $0\le i\le q$. Set aside the plane $\pi$ and denote the remaining $q$ planes in $V_{i}$ as $\pi_{i1},\cdots,\pi_{iq}$.  Let $O$ be the first point. For two lines on the same plane $\pi_{ij}$, for example $m_{11}$ and $m_{12}$, if we connect the point $O$ to every point of $m_{11}$ then we get $q+1$ lines, each of which intersects $m_{12}$ in exactly one point. Thus, we can build a one-to-one correspondence between the points on every two lines on the same plane $\pi_{ij}$. Consider the points lying on the lines in $\pi_{ij}\setminus l_{i}$, for example, points on $m_{11},\cdots,m_{1q}$ when $i=0,j=1$. We can easily order all the points such that no three consecutive points are collinear and no two consecutive points are collinear with $O$ for $q\ge 3$ after a similar discussion as in Lemma \ref{thmb2}.

  Choose two planes $\pi_{ij},\pi_{st},i\ne s$ and suppose we have ordered the points corresponding to the two planes as $P_{0},\cdots,P_{q^2}$ and $Q_{0},\cdots,Q_{q^2}$. We then order the points alternately as $P_{0},Q_{0},P_{1},Q_{1},\cdots,P_{q^2},\\Q_{q^2}$. Note that if we choose any five consecutive points, then three of them form the plane $\pi_{ij}$ or $\pi_{st}$ and the other two points form a line not through $O$. Thus they are not in a projective $3$-space, since in a projective $3$-space, every plane incident with a given line is also incident with every point on the line, and a line not on a plane must meet the plane in a point.

  The number of such planes $\pi_{ij}$ is $q(q+1)$, an even number. Thus we can always consider planes from different $V_{i}$s in pairs, $0\le i\le q$. Similar to Lemma \ref{thmb2} and Lemma \ref{thmb3}, we can repeat the procedure above until we have put $n$ ($5\le n \le q^{4}+q^{3}+q^2+q+1$) points in order and make sure that no five cyclically consecutive points lie in a projective $3$-space. We omit the tedious details here since the argument is analogous.
\end{proof}

Combining Lemma \ref{Thm1} and Lemma \ref{thmb4}, we have the following theorem.
\begin{theorem}
  There exists a linear MDS $(n,9)_{q}$ $4$-symbol code for $q\ge 3$ being a prime power with length $n$ ranging from $9$ to $q^4+q^3+q^2+q+1$.
\end{theorem}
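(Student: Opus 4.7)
The plan is simply to assemble the two ingredients built up in Section~\ref{conpro}: invoke Lemma~\ref{Thm1} with $b=4$ to convert a suitable point configuration in $PG(4,q)$ into a linear MDS $(n,2b+1)_q = (n,9)_q$ $4$-symbol code, and use Lemma~\ref{thmb4} to supply that configuration. Concretely, for each $n$ with $9 \le n \le q^4+q^3+q^2+q+1$ we let $\mathcal{S}$ be the ordered point set $\P_0,\P_1,\ldots,\P_{n-1}$ produced by Lemma~\ref{thmb4}, and form the $5 \times n$ matrix $H$ whose columns are representatives of these points; then $H$ will serve as the parity-check matrix of the desired code.

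Next I will verify the two hypotheses of Lemma~\ref{Thm1}. The second condition—that no $b+1=5$ cyclically consecutive columns of $H$ lie in a projective $3$-space (equivalently, that they are linearly independent in $V(5,q)$)—is exactly the output of Lemma~\ref{thmb4}, so nothing further is needed. For the first condition, I need three points of $\mathcal{S}$ that lie on a common line of $PG(4,q)$. The construction of Lemma~\ref{thmb4} fills in each plane $\pi_{ij}$ by ordering all $q^2$ points of $\pi_{ij}\setminus l_i$ along its constituent lines $m_{j1},\ldots,m_{jq}$ (or the analogous lines in the other $V_i$), so as soon as $n$ is large enough that some line $m_{jk}$ is completely exhausted—which happens already for very small $n$ since $q+1\ge 4$—we automatically have $q+1\ge 3$ collinear points in $\mathcal{S}$. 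In the few borderline cases $n=9$ or $n=10$ where $\mathcal{S}$ might conceivably avoid completing a line, it is a trivial matter to reorder the last one or two choices so that three of the selected points become collinear; this is possible because at $b=4$ the constraint ``no five cyclically consecutive points in a $3$-space'' leaves enormous freedom in choosing individual points.

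With both hypotheses of Lemma~\ref{Thm1} verified, the lemma (applied with $d=3$, so $d+b-2=5$ and $d+2b-2=9$) yields a linear MDS $(n,9)_q$ $4$-symbol code, proving the theorem. The only potentially delicate point is the boundary case verification for Condition 1 at the smallest values of $n$; I expect this to be the main, though still mild, obstacle, since the body of Lemma~\ref{thmb4} is focused entirely on avoiding $4$-space incidences and does not explicitly record the presence of a collinear triple. Everything else is an immediate combination of previously established results.
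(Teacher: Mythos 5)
Your proposal is correct and follows exactly the paper's route: the paper proves this theorem by the one-line combination of Lemma~\ref{Thm1} (with $b=4$, $d=3$) and Lemma~\ref{thmb4}, dismissing the collinear-triple condition with the earlier remark that it ``can be easily satisfied.'' Your extra care in checking Condition~1 for small $n$ is a welcome addition rather than a deviation.
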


\subsection{More constructions}\label{sb4}
 We first show the existence of MDS $b$-symbol codes for general $b\ge 5$ in the following theorem, which works quite well when $q$ is larger enough than $b$.
\begin{theorem}\label{bound1}
  There exists a linear MDS $(n,2b+1)_{q}$ $b$-symbol code for $q$ being a prime power, $q\ge b\ge 5$, with length $n$ ranging from $2b+1$ to $q^{b}-bq^{b-1}+\frac{b^2+3b}{2}$.
\end{theorem}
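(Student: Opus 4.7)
By Lemma~\ref{Thm1}, it suffices to exhibit an ordered set $\{P_0, P_1, \dots, P_{n-1}\}$ of $n$ points of $PG(b,q)$ in which some three are collinear (easily arranged at the outset) and every $b+1$ cyclically consecutive points span all of $PG(b,q)$, i.e., do not lie on any projective $(b-1)$-space. I plan to build such a sequence by a structured greedy algorithm, in the spirit of Lemmas~\ref{thmb2}, \ref{thmb3}, and \ref{thmb4}.

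First, fix an initial ``head'' $P_0=\langle e_0\rangle, P_1=\langle e_1\rangle, \dots, P_b=\langle e_b\rangle$, the standard projective basis, so the very first window is in general position. For $k=b+1, b+2, \dots$, I would inductively pick $P_k$ to be any point of $PG(b,q)$ that is distinct from $P_0, \dots, P_{k-1}$ and does not lie on the hyperplane $H_k := \langle P_{k-b}, \dots, P_{k-1}\rangle$; the induction guarantees that $H_k$ is a genuine hyperplane, and $|PG(b,q)\setminus H_k| = q^b$, so many choices remain at each intermediate step.

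The main obstacle is cyclic closure: we additionally need the $b$ wrap-around windows $\{P_{n-b+j}, \dots, P_{n-1}, P_0, \dots, P_j\}$, $j = 0, 1, \dots, b-1$, to be in general position. Because the head was chosen as standard basis vectors, the $j$-th wrap-around condition reduces to the invertibility of the $(b-j)\times(b-j)$ submatrix formed by the last $b-j$ homogeneous coordinates of the ``tail'' points $P_{n-b+j}, \dots, P_{n-1}$. I would therefore treat the last $b$ points specially: when placing each $P_{n-k}$, in addition to the ordinary linear constraint, $P_{n-k}$ must avoid a small collection of extra hyperplanes coming from those wrap-around windows whose other points are already fixed.

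The decisive quantitative step is to bound the union of the forbidden hyperplanes (together with the $n-1$ previously used points) at each tail step. Using the explicit coordinate description, each wrap-around hyperplane is spanned by a subset of $\{\langle e_0\rangle, \dots, \langle e_{b-1}\rangle\}$ together with some already-placed tail points, and the pairwise and higher-order intersections of these hyperplanes collapse onto projective subspaces generated by subsets of the head basis. A careful inclusion–exclusion on these intersections should show that the size of the forbidden set is at most $bq^{b-1} + (n-1) - \frac{b^2+3b}{2}$, where the correction $\frac{b^2+3b}{2} = \binom{b+1}{2} + b$ accumulates the savings from the shared head-basis subspaces across the hyperplanes active at the tightest (final) tail step. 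Consequently a valid choice exists at every stage whenever $n \le q^b - bq^{b-1} + \frac{b^2+3b}{2}$, and the hypothesis $q \ge b$ ensures this bound is at least $2b+1$ so the construction can reach the claimed length. Inserting the resulting sequence into Lemma~\ref{Thm1} produces the required MDS $(n, 2b+1)_q$ $b$-symbol code.
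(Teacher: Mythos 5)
Your high-level route (Lemma~\ref{Thm1} plus a greedy ordering of points in $PG(b,q)$) is the paper's route, but the way you defer all cyclic conditions to the last $b$ points creates a genuine gap. As you describe it, a tail point $P_{n-k}$ with $k\ge 2$ only has to avoid wrap-around hyperplanes ``whose other points are already fixed''; but every wrap-around window containing $P_{n-k}$ also contains the not-yet-placed $P_{n-1}$, so under your own rule $P_{n-k}$ receives no extra constraint at all. This is fatal: nothing prevents, say, $P_0$ from lying in $\langle P_{n-b},\dots,P_{n-2}\rangle$, in which case the window $\{P_{n-b},\dots,P_{n-1},P_0\}$ lies in a hyperplane for \emph{every} possible choice of $P_{n-1}$ and the construction cannot be completed. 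To repair this you must impose, at each tail step, that $P_{n-k}$ keeps the already-placed portion of each wrap-around window independent (equivalently, keep your partial coordinate submatrices of full rank), i.e.\ avoid a union of lower-dimensional subspaces, and then include those in the count; your submatrix remark points in this direction but you never actually impose or estimate these conditions.

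The quantitative step is also off. At the final step $P_{n-1}$ must avoid $b+1$ hyperplanes (one ordinary window plus $b$ wrap-around windows); since consecutive windows share $b-1$ points, consecutive hyperplanes meet in a projective $(b-2)$-space, so their union has at most $\frac{q^b-1}{q-1}+b\,q^{b-1}$ points. Your stated bound $bq^{b-1}+\cdots$ omits an entire hyperplane's worth, $\frac{q^b-1}{q-1}$. The corrected union count is exactly what produces the threshold $q^b-bq^{b-1}+\frac{b^2+3b}{2}$ (after subtracting the $\frac{b^2+3b}{2}$ already-used points lying in the union) and is where the hypothesis $q\ge b$ is genuinely needed; with your undercount that hypothesis never enters. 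The paper sidesteps the whole tail problem by maintaining the full cyclic condition as an invariant: every new point $P_{k+1}$ is required to avoid all $b+1$ ``seam'' hyperplanes $\langle P_{k-b+1+j},\dots,P_k,P_1,\dots,P_j\rangle$, $0\le j\le b$, simultaneously, so the union-of-hyperplanes estimate is carried out uniformly at every step and no window can be doomed in advance.
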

\begin{proof}
   From Lemma \ref{Thm1}, we mainly need to order $n$ points in $PG(b,q)$ such that any $b+1$ cyclically consecutive points do not lie in a projective $(b-1)$-space. We prove this theorem by induction. In $PG(b,q)$, we can easily find $b+1$ points that generate the whole space. Suppose we already have  $k$ ordered points, $b+1\le k< q^{b}-bq^{b-1}+2b$, denoted as $\P_{1},\P_{2},\cdots,\P_{k}$, such that any $b+1$ cyclically consecutive points do not lie in a projective $(b-1)$-space.

  Consider the $b+1$ projective ($b-1$)-spaces $V_{0},V_{1},\cdots,V_{b}$ generated by $\lbrace\P_{k-b+1},\P_{k-b+2},\cdots,\P_{k}\rbrace$, $\lbrace\P_{k-b+2},\P_{k-b+3},\cdots, \P_{k},\P_{1}\rbrace$, $\cdots$, $\lbrace\P_{1},\P_{2},\cdots,\P_{b}\rbrace$ respectively. We can always find a new suitable point $\P_{k+1}$ if the remaining points, i.e., points in $PG(b,q)\setminus\lbrace P_{1},P_{2},\cdots,P_{k}\rbrace$, are not all covered by the projective spaces $V_{0},V_{1},\cdots,V_{b}$.

  We determine the largest number of the remaining points covered by the $b+1$ spaces above. Two projective $(b-1)$-spaces in $PG(b,q)$ must intersect in a projective $(b-2)$-space. Thus $V_{0}$ covers $\frac{q^b-1}{q-1}$ points and any other $V_{i}$ covers at most $\frac{q^b-1}{q-1}-\frac{q^{b-1}-1}{q-1}$ new points for $1\le i\le q$. Besides, we should exclude the points $\P_{k-b+1},\P_{k-b+2},\cdots,\P_{k},\P_{1}, \cdots,\P_{b}$ when we count for each space. For example, we should exclude points $\P_{k-b+1},\P_{k-b+2},\cdots,\P_{k}$ when counting the points for $V_{0}$. And we  exclude only one point $P_{1}$ when counting for $V_{1}$ since points $\P_{k-b+2},\cdots,\P_{k}$ are in the intersection of $V_{0}$ and $V_{1}$. Note that some of the $2b$ points $\P_{k-b+1},\P_{k-b+2},\cdots,\P_{k},\P_{1},\cdots,\P_{b}$ may be the same when $k<2b$. And the total number of points we should exclude takes the minimum value $2b$ when $k=b+1$. Therefore, the $b+1$ projective spaces $V_{0},V_{1},\cdots,V_{b}$  can totally cover at most $(b+1)\frac{q^b-1}{q-1}-b\frac{q^{b-1}-1}{q-1}-2b$ points in $PG(b,q)\setminus\lbrace P_{1},P_{2},\cdots,P_{k}\rbrace$. And we can always find a new proper point $\P_{k+1}$ when $k< q^{b}-bq^{b-1}+2b$.

  Since $q\ge b$, we can always order $2b$ points such that any $b+1$ cyclically consecutive points do not lie in a projective $(b-1)$-space. Suppose we have ordered at least $2b$ points, i.e., $k\ge 2b$. In this case, no two of the $2b$ points $\P_{k-b+1},\P_{k-b+2},\cdots,\P_{k},\P_{1},\cdots,\P_{b}$ are the same and the largest number of points in $PG(b,q)\setminus\lbrace P_{1},P_{2},\cdots,P_{k}\rbrace$ covered by the $b+1$ projective $(b-1)$-spaces becomes $(b+1)\frac{q^b-1}{q-1}-b\frac{q^{b-1}-1}{q-1}-\frac{b^2+3b}{2}$. The conclusion follows.
  \end{proof}

 According to Theorem \ref{MainThm}, if we let $d=2$ and regard the columns of $H$ as vectors in $V(b,q)$, then we have the following lemma.

\begin{lemma}\label{thmd2}
There exists a linear MDS $(n,2b)_{q}$ $b$-symbol code $\c$ if there exists a set $\mathcal{S}$ of $n\geq 2b$ vectors of $V(b,q)$ satisfying:
\begin{itemize}
\item[1.] there exist $2$ linearly dependent vectors;
\item[2.] any $b$ cyclically consecutive vectors are linearly independent.
\end{itemize}
\end{lemma}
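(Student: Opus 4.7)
The plan is to recognize that Lemma \ref{thmd2} is precisely the $d=2$ specialization of Theorem \ref{MainThm}, so the proof will reduce to matching up hypotheses and invoking that theorem. First I would form the matrix $H=[H_{0},H_{1},\ldots,H_{n-1}]$ whose $n$ columns are the vectors of $\mathcal{S}$ listed in the given cyclic order. Since $\mathcal{S}\subseteq V(b,q)$, this matrix has exactly $b=d+b-2$ rows (taking $d=2$) and $n\geq 2b=d+2b-2$ columns, which matches the size hypothesis of Theorem \ref{MainThm}.

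Next I would translate each numbered hypothesis of the lemma into the corresponding column condition of Theorem \ref{MainThm} with $d=2$. Condition 2 of Theorem \ref{MainThm} (``there exist $d$ linearly dependent columns'') is literally Hypothesis 1 of the lemma, and Condition 3 (``any $d+b-2$ cyclically consecutive columns are linearly independent'') is literally Hypothesis 2. The one point that requires a short argument is Condition 1 of Theorem \ref{MainThm}, which for $d=2$ reads ``every single column is linearly independent'', i.e.\ every $H_{i}$ is nonzero. This is the only step I expect to be even mildly subtle, and it is immediate from Hypothesis 2: each individual vector sits inside some window of $b$ cyclically consecutive vectors that are linearly independent, and a linearly independent set cannot contain the zero vector.

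With all three hypotheses of Theorem \ref{MainThm} verified, its conclusion will yield a linear MDS $(n,d+2b-2)_{q}=(n,2b)_{q}$ $b$-symbol code having $H$ as parity-check matrix, which is exactly the statement of the lemma. No separate analysis of $b$-weights or $b$-distances is needed, since the detailed weight argument (the wraparound/block-zero analysis) is already carried out inside the proof of Theorem \ref{MainThm}. In short, the hard work has been done upstream, and the only job here is to confirm the straightforward translation from ``$n$ columns of an $H$ of shape $b\times n$'' to ``$n$ vectors in $V(b,q)$''.
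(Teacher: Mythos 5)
Your proposal is correct and matches the paper exactly: the paper derives Lemma \ref{thmd2} by the same specialization of Theorem \ref{MainThm} to $d=2$, with the columns of $H$ taken to be the vectors of $\mathcal{S}$. Your extra remark that Condition 1 (nonzero columns) follows from Hypothesis 2 is a small point the paper leaves implicit, but it is the same argument.
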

Similar to Theorem \ref{bound1}, we can derive the following theorem.
\begin{theorem}\label{thmn}
  There exists a linear MDS $(n,2b)_{q}$ $b$-symbol code with  $n\ge 2b$ for $q\ge b-1$ being a prime power, $b\ge 3$ or $q=2,b=4$.
\end{theorem}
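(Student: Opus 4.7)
The plan is to apply Lemma \ref{thmd2} and build the required set $\mathcal{S}\subseteq V(b,q)$ inductively on $n$. The base case $n=2b$ I would handle explicitly by taking $v_i:=e_{((i-1)\bmod b)+1}$, two concatenated copies of the standard basis: every $b$ cyclically consecutive vectors are then a cyclic rotation of the basis and hence linearly independent, while the pair $(v_1,v_{b+1})$ (being equal) is linearly dependent, so both hypotheses of Lemma \ref{thmd2} hold at length $2b$.

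For the inductive step, given a valid sequence $v_1,\ldots,v_k$ with $k\ge 2b$, the task is to find $v_{k+1}\in V(b,q)$ lying outside each of the $b$ hyperplanes
\[
W_j:=\mathrm{span}\bigl(v_{k-b+2+j},\ldots,v_k,v_1,\ldots,v_j\bigr),\qquad j=0,1,\ldots,b-1,
\]
since these are the spans of the $b-1$ ``other'' vectors in each of the $b$ new cyclic $b$-windows containing $v_{k+1}$; each $W_j$ has dimension $b-1$ by the inductive hypothesis. My strategy is to first establish the key structural fact that $\bigcap_{j=0}^{b-1}W_j=\{0\}$, which I would prove by an internal induction on $m$ showing every $u\in\bigcap_{j=0}^{m}W_j$ lies in $\mathrm{span}(v_{k-b+2+m},\ldots,v_k)$: equating the representation of $u$ supplied by the inductive hypothesis with one coming from $u\in W_{m+1}$ and subtracting yields a linear relation on the $b$ cyclically consecutive vectors $v_{k-b+2+m},\ldots,v_k,v_1,\ldots,v_{m+1}$, whose linear independence (by hypothesis) forces every coefficient to vanish and therefore $u\in\mathrm{span}(v_{k-b+3+m},\ldots,v_k)$; at $m=b-1$ this collapses to $\{0\}$.

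Once $\bigcap_j W_j=\{0\}$ is in hand, the $b$ linear forms $L_0,\ldots,L_{b-1}\in V(b,q)^{*}$ with $\ker L_j=W_j$ are linearly independent, and a change of coordinates turning them into the coordinate functions exhibits the complement $V(b,q)\setminus\bigcup_j W_j$ as the set of vectors with all coordinates nonzero, giving exactly $(q-1)^b\ge 1$ valid choices for any prime power $q\ge 2$. Picking any such vector to be $v_{k+1}$ restores condition 2 of Lemma \ref{thmd2}, while condition 1 persists because the linearly dependent pair from the base case is never removed. The main obstacle I anticipate is the propagation claim $\bigcap_j W_j=\{0\}$, whose proof tightly uses the assumption that \emph{every} $b$ cyclically consecutive subsequence is linearly independent; once that is settled, the counting bound yields uniform existence for all $n\ge 2b$ under the stated hypotheses, including the special case $q=2,b=4$.
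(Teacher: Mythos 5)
Your proof is correct, and while it shares the paper's overall skeleton --- extend the ordered sequence one vector at a time, choosing $v_{k+1}$ outside the $b$ hyperplanes $W_0,\dots,W_{b-1}$ spanned by the $b-1$ ``other'' vectors of each new cyclic window --- the way you certify that such a vector exists is genuinely different and sharper. The paper bounds the union from above by $|W_1|+\sum_{i\ge 2}|W_i\setminus W_1|\le (q^{b-1}-1)+(b-1)(q^{b-1}-q^{b-2})$ plus a small adjustment; this accounts only for pairwise intersections with one fixed hyperplane, and the positivity of the resulting lower bound on the complement is precisely what forces the hypothesis $q\ge b-1$ (or $q=2$, $b=4$). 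You instead establish the structural fact $\bigcap_{j=0}^{b-1}W_j=\{0\}$ by a telescoping induction that at each stage subtracts two representations of $u$ and invokes the linear independence of $b$ cyclically consecutive vectors; I checked this step and it is sound (the two index blocks $\{k-b+2+m,\dots,k\}$ and $\{1,\dots,m+1\}$ are disjoint since $k\ge 2b$, and the endpoint $m=b-1$ collapses to the empty span). Then $b$ hyperplanes with trivial common intersection in a $b$-dimensional space are the kernels of a dual basis, so their union misses exactly $(q-1)^b\ge 1$ vectors --- an exact count rather than an estimate. One consequence is worth flagging: your argument nowhere uses $q\ge b-1$, so it actually proves the stronger statement that a linear MDS $(n,2b)_q$ $b$-symbol code exists for every prime power $q$, every $b\ge 2$ and every $n\ge 2b$, which settles Conjecture 2 of the paper and subsumes Theorem \ref{thmb5} and the $b\mid n$ construction. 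A final cosmetic remark: your base case (two concatenated copies of the standard basis) supplies condition 1 of Lemma \ref{thmd2} via the repeated column $v_1=v_{b+1}$, which is legitimate since that lemma requires only a dependent pair, not distinct columns.
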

\begin{proof}
  In a $b$-dimensional vector space $V(b,q)$, we can easily find $b$ vectors that generate the whole space. Suppose we already have $k\ge b$ ordered vectors, denoted as $v_{1},v_{2},\cdots,v_{k}$, such that any $b$ cyclically consecutive vectors are linearly independent.

  First, we consider the ($b-1$)-dimensional vector spaces $V_{1},V_{2},\cdots,V_{b}$ generated by $\lbrace v_{k-b+2},v_{k-b+3},\\\cdots,v_{k}\rbrace$, $\lbrace v_{k-b+3},v_{k-b+4},\cdots,v_{k},v_{1}\rbrace$, $\cdots$, $\lbrace v_{1},v_{2},\cdots,v_{b-1}\rbrace$ respectively. Two $(b-1)$-dimensional vector spaces in $V(b,q)$ must intersect in a $(b-2)$-dimensional vector space. Next we determine the largest number of nonzero vectors covered by the spaces above. $V_{1}$ covers $q^{b-1}-1$ nonzero vectors and any other $V_{i}$ covers at most $q^{b-1}-q^{b-2}$ new non-zero vectors for $2\le i\le b$. Besides, we should exclude the vectors $v_{k-b+2},\cdots,v_{k},\cdots,v_{1},v_{b-1}$ when we count for each vector space. Thus they can totally cover at most $bq^{b-1}-(b-1)q^{b-2}-2(b-1)-1$ nonzero vectors. We can always find a new proper vector $v_{k+1}$ unless all the non-zero vectors are covered by the $b$ vector spaces. In other words, we can always find a new proper vector if $q^b-bq^{b-1}+(b-1)q^{b-2}+2(b-1)\ge 1$, which turns out to be $q\ge b-1$ or $q=2,b=4$.
\end{proof}

 In the previous subsections we have given strategies to order $n$ vectors in the projective space $PG(b,q)$ such that any $b+1$ cyclically consecutive vectors are linearly independent for $b=2,3,4$ and $2b+1\le n\le \frac{q^{b+1}-1}{q-1}$. The following conclusion shows that $q\ge b-1$ is not an essential condition in Theorem \ref{thmn} if we order the vectors carefully.

 \begin{theorem}\label{thmb5}
There exists a linear MDS $(n,10)_{q}$ $5$-symbol code for $q\ge 3$ being a prime power and $n\ge 10$.
\end{theorem}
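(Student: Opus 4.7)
The plan is to invoke Lemma~\ref{thmd2} with $b=5$: it suffices to produce, for each $n \ge 10$, an ordered sequence of $n$ vectors in $V(5,q)$ with two linearly dependent entries and every five cyclically consecutive entries linearly independent. Theorem~\ref{thmn} handles $q \ge 4$ directly, and for $q=3$ with $10 \le n \le q^4+q^3+q^2+q+1 = 121$ the combination of Lemma~\ref{thmb4} and Theorem~\ref{thmb+1} already suffices: Lemma~\ref{thmb4} (applied with $b=4$) delivers a linear MDS $(n,9)_3$ $4$-symbol code in that range, and Theorem~\ref{thmb+1} promotes it to a linear MDS $(n,10)_3$ $5$-symbol code since $n > 9$. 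Hence the only new case is $q=3$ with $n \ge 122$.

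For $n \ge 122$, because $n$ exceeds $|PG(4,3)| = 121$, any sequence of $n$ vectors in $V(5,3)$ must contain two whose projective classes coincide---hence linearly dependent---so condition~(1) of Lemma~\ref{thmd2} comes for free, and only the independence of every five cyclically consecutive entries needs to be enforced. I would start from the base cyclic sequence $v_1, \ldots, v_{121}$ produced by Lemma~\ref{thmb4} (lifted from $PG(4,3)$ to $V(5,3)$) and insert further vectors one by one. Each inserted vector must avoid the five $4$-dimensional subspaces $V_1, \ldots, V_5$ of $V(5,3)$ spanned by the four-tuples of its neighbors in each $5$-window covering its position. Since in a $5$-dimensional space any two $4$-dimensional subspaces meet in dimension at least $3$, any three in dimension at least $2$, and any four in dimension at least $1$, a full inclusion–exclusion in generic position bounds $|V_1 \cup \cdots \cup V_5|$ by $5\cdot 3^4 - 10\cdot 3^3 + 10\cdot 3^2 - 5\cdot 3 + 1 = 211$, leaving at least $3^5 - 211 = 32$ admissible nonzero vectors at each extension step.

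The main obstacle is that the $211$ bound above is valid only in generic position: in degenerate configurations---for instance, if the five $V_i$ all share a common $3$-dimensional subspace, in which case they already exhaust $V(5,3)$---no admissible next vector remains. I would handle this by showing that degenerate configurations correspond to a strictly smaller pool of $v_{k+1}$-choices that can always be avoided inside the admissible pool, and by propagating this non-degeneracy step by step via an induction in the spirit of the ordering arguments in Lemmas~\ref{thmb2}–\ref{thmb4}. Concretely, at each step I would select the new vector both to evade $V_1 \cup \cdots \cup V_5$ and to keep the collection of $4$-tuple spans produced at the next step in sufficiently generic position that the $211$-bound continues to apply. The careful bookkeeping needed to maintain this non-degeneracy through an unbounded number of extensions (rather than the raw counting) is the technical heart of the proof.
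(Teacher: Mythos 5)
Your reductions for $q\ge 4$ (via Theorem \ref{thmn}) and for $q=3$, $10\le n\le 121$ (via Lemma \ref{thmb4} together with Lemma \ref{Thm1} and Theorem \ref{thmb+1}) are correct. The problem is the remaining case $q=3$, $n\ge 122$, which is exactly where your argument stops being a proof. The greedy extension you describe hinges on bounding $|V_1\cup\cdots\cup V_5|$, but the number $211$ you compute is the \emph{generic} value of the inclusion--exclusion, not an upper bound: degenerate configurations cover \emph{more} vectors, not fewer. Indeed, already four hyperplanes of $V(5,3)$ through a common $3$-dimensional subspace cover $27+4\cdot 54=243=3^5$ vectors, i.e.\ all of $V(5,3)$, so at such a step no admissible $v_{k+1}$ exists at all. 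Your proposed remedy --- choosing each new vector so that the hyperplane configurations arising at later steps remain generic --- is precisely the unproven content: one must verify both that such a ``non-degeneracy-preserving'' choice always exists inside the admissible pool and that the property propagates through arbitrarily many insertions (the five relevant $4$-tuples overlap in three vectors each, so their spans are far from independent constraints), and neither is done. As written, the case $q=3$, $n\ge 122$ is an acknowledged but unfilled gap.

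The paper closes this case (in fact all cases at once) with a concatenation trick that avoids unbounded greedy extension entirely: write $n=n_1+\cdots+n_t$ with $t\ge 2$ and $5\le n_i\le \frac{q^5-1}{q-1}$, take for each $i$ a cyclic ordering of $n_i$ points of $PG(4,q)$ from Lemma \ref{thmb4} (no five cyclically consecutive points in a projective $3$-space), arrange that all $t$ sequences begin with the same four points, and concatenate. Every window of five consecutive vectors straddling a junction consists of the last $j$ vectors of one block followed by the first $5-j\le 4$ vectors of the next, which equal the first $5-j$ vectors of the preceding block; hence each junction window is a cyclic wrap-around window of a single block and is independent by construction, while the repeated initial point (since $t\ge 2$) supplies the two linearly dependent vectors required by Lemma \ref{thmd2}. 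You could finish your own argument the same way: concatenating two of your length-at-most-$121$ solutions for $q=3$ in this manner disposes of $n\ge 122$ with no counting at all.
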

\begin{proof}
  For $n\ge 10$, we can find integers $n_{1},n_{2},\cdots,n_{t}$ such that $n=n_{1}+n_{2}+\cdots+n_{t}$, where $t\ge 2$ and $5\le n_{i}\le \frac{q^5-1}{q-1}$. From the conclusion in Subsection \ref{sb3}, we can find $t$ sequences of ordered points in $PG(4,q)$, denoted as $S_{1},S_{2},\cdots,S_{t}$, each of which has $n_{i}$ points and any $5$ cyclically consecutive points are linearly independent. Let the first $4$ points of the $t$ sequences be the same, which can be easily satisfied. Concatenating the $t$ sequences, we get a sequence of length $n$ satisfying the conditions in Lemma \ref{thmd2}.
\end{proof}

It is obvious that we can always find $b$ linearly independent vectors in a $b$-dimensional vector space $V(b,q)$. By the discussion in Theorem \ref{thmb5} we can conclude the following theorem.
\begin{theorem}
There exists a linear MDS $(n,2b)_{q}$ $b$-symbol code for $q$ being a prime power, $b\ge 5$, $n\ge 2b$ and $b|n$.
\end{theorem}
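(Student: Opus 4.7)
The plan is to invoke Lemma \ref{thmd2}, which reduces the task to exhibiting a cyclic sequence of $n$ vectors in $V(b,q)$ such that (i) two of them are linearly dependent and (ii) every window of $b$ cyclically consecutive vectors is linearly independent. The divisibility assumption $b\mid n$ is the crucial hook: writing $n=bm$ with $m\ge 2$, I would simply fix any basis $v_{1},v_{2},\ldots,v_{b}$ of $V(b,q)$ and take the sequence obtained by concatenating $m$ identical copies of the ordered tuple $(v_{1},v_{2},\ldots,v_{b})$.

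Next I would verify the two conditions of Lemma \ref{thmd2} for this sequence. For condition $2$, the sequence is periodic of period $b$, and since $b$ divides the total length $n$, any window of $b$ cyclically consecutive entries has the form $v_{i+1},v_{i+2},\ldots,v_{b},v_{1},\ldots,v_{i}$ for some $0\le i\le b-1$; as a cyclic rotation of a basis this is again a basis, and in particular linearly independent. For condition $1$, because $m\ge 2$, each basis vector appears at least twice in the sequence, so any two positions separated by a multiple of $b$ yield two (equal, hence linearly dependent) vectors.

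Applying Lemma \ref{thmd2} to this sequence then produces the claimed linear MDS $(n,2b)_{q}$ $b$-symbol code, completing the proof. I do not foresee a real obstacle: the whole argument is essentially the observation that a periodic sequence whose period divides its length automatically satisfies the cyclic linear-independence constraint, so no finite field size restriction is needed, and the construction is valid for every prime power $q$ and every $b\ge 5$ with $b\mid n$. The mild trade-off, compared to earlier theorems of the paper that required $q$ to be large relative to $b$, is exactly the divisibility hypothesis $b\mid n$, which is what lets the simple periodic concatenation (in the spirit of the concatenation idea used in Theorem \ref{thmb5}) succeed without any further combinatorial case analysis.
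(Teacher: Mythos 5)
Your proposal is correct and is essentially the paper's own argument: the paper also fixes $b$ linearly independent vectors and, invoking the concatenation idea of Theorem \ref{thmb5}, repeats them $n/b$ times so that every cyclic window of length $b$ is a rotation of the basis, then applies Lemma \ref{thmd2}. Your write-up just makes the periodicity verification explicit.
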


\section{MDS $b$-symbol codes from constacyclic codes}\label{concyc}
For $\eta\in\mathbb{F}_{q}^{*}$, a $q$-ary linear code $C$ of length $n$ is called $\eta$-constacyclic if it is invariant under the $\eta$-constacyclic shift of $\mathbb{F}_{q}^{n}$:
$$(c_{0},c_{1},\cdots,c_{n-1})\rightarrow(\eta c_{n-1},c_{0},\cdots,c_{n-2}).$$
If we identify each codeword $c=(c_{0},c_{1},\cdots,c_{n-1})$ with its polynomial representation $c(x)=c_{0}+c_{1}x+\cdots+c_{n-1}x^{n-1}$, then an $\eta$-constacyclic code $C$ of length $n$ over $\mathbb{F}_{q}$ is  identified with an ideal of the quotient ring $\mathbb{F}_{q}[x]/\langle x^{n}-\eta\rangle$, and $xc(x)$ corresponds to an $\eta$-constacyclic shift of $c(x)$. Moreover, $\mathbb{F}_{q}[x]/\langle x^{n}-\eta\rangle$ is a principal ideal ring, and $C$ is generated by a monic divisor $g(x)$ of $x^{n}-\eta$. In this case, $g(x)$ is called the generating polynomial of $C$ and we write $C=\langle g(x)\rangle$.

Let $\eta\in\mathbb{F}_{q}$ be a primitive $r$-th root of unity. Since $\textup{gcd}(n,q)=1$, there exists a primitive $(rn)$-th root of unity $\omega$ in some extension field of $\mathbb{F}_{q}$ such that $\omega^{n}=\eta$. It can be verified that
$$x^{n}-\eta=\prod_{i=0}^{n-1}(x-\omega^{1+ir}).$$

Let $\Omega=\{1+ir|0\leq i\leq n-1\}$. For each $j\in\Omega$, let $C_{j}$ be the $q$-cyclotomic coset modulo $rn$ containing $j$. Let $C$ be an $\eta$-constacyclic code of length $n$ over $\mathbb{F}_{q}$ with generating polynomial $g(x)$. Then the set $Z=\{j\in\Omega|g(\omega^{j})=0\}$ is called the defining set of $C$. We can see that the defining set of $C$ is a union of some $q$-cyclotomic cosets modulo $rn$ and $\textup{dim}(C)=n-|Z|$.

Similar to cyclic codes, there exists the following BCH bound for constacyclic codes.

\begin{theorem}$(${\rm\cite{KZL15}} The BCH bound for constacyclic codes$)$
Let $C$ be an $\eta$-constacyclic code of length $n$ over $\mathbb{F}_{q}$, where $\eta$ is a primitive $r$-th root of unity. Let $\omega$ be a primitive $(rn)$-th root of unity in an extension field of $\mathbb{F}_{q}$ such that $\omega^{n}=\eta$. Assume the generating polynomial of $C$ has roots that include the set $\{\omega^{1+ri}|i_{1}\leq i\leq i_{1}+d-2\}$. Then the minimum Hamming distance of $C$ is at least $d$.
\end{theorem}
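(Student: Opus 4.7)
The plan is to adapt the classical BCH-bound proof for cyclic codes to the constacyclic setting via a Vandermonde argument. I would argue by contradiction: assume there is a nonzero codeword $c(x)=\sum_{j=0}^{n-1}c_j x^j\in C$ of Hamming weight $w\le d-1$, with nonzero coefficients $c_{j_1},\ldots,c_{j_w}$ at positions $0\le j_1<j_2<\cdots<j_w\le n-1$, and derive a contradiction from the hypothesis that $c(\omega^{1+ri})=0$ for every $i$ with $i_1\le i\le i_1+d-2$.

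Substituting $c$ into these $d-1\ge w$ evaluation equations produces a homogeneous linear system in the unknowns $c_{j_1},\ldots,c_{j_w}$. Restricting attention to any $w$ consecutive indices, say $i=i_1,i_1+1,\ldots,i_1+w-1$, I would factor out the common scalar $\omega^{(1+ri_1)j_k}$ from column $k$ so that the coefficient matrix splits as $\mathrm{diag}\bigl(\omega^{(1+ri_1)j_k}\bigr)\cdot V$, where $V$ is the Vandermonde matrix with entries $V_{ik}=\bigl(\omega^{rj_k}\bigr)^{i-i_1}$ in the nodes $\omega^{rj_1},\ldots,\omega^{rj_w}$. Since the diagonal prefactor is invertible (its entries are powers of the unit $\omega$), nonsingularity of the whole system reduces to nonsingularity of $V$.

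The main step, and the only place where constacyclicity actually enters, is to verify that the Vandermonde nodes $\omega^{rj_1},\ldots,\omega^{rj_w}$ are pairwise distinct. Since $\omega$ has multiplicative order exactly $rn$ and $0\le j_1<\cdots<j_w\le n-1$, the exponents $rj_1,\ldots,rj_w$ are pairwise noncongruent modulo $rn$, so their images under $\omega^{(\cdot)}$ are distinct. Hence $V$ is nonsingular, the only solution of the system is $c_{j_1}=\cdots=c_{j_w}=0$, contradicting $c\ne 0$. Therefore every nonzero codeword has Hamming weight at least $d$, which is exactly the stated BCH bound.

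I do not expect a genuine obstacle here; the only subtlety is bookkeeping, namely ensuring that $\omega$ is chosen with order precisely $rn$ (not a proper divisor) so that the map $j\mapsto\omega^{rj}$ is injective on $\{0,1,\ldots,n-1\}$. Once that is in place, the Vandermonde nonvanishing is immediate and the proof is completely parallel to the classical cyclic case, with the shift from roots $\omega^i$ to roots $\omega^{1+ri}$ being absorbed into the diagonal prefactor.
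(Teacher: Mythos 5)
Your argument is correct and is the standard Vandermonde proof of the BCH bound, correctly transported to the constacyclic setting: the roots $\omega^{1+ri}$ differ from the cyclic case only by the factor $\omega^{j_k}$ absorbed into the invertible diagonal matrix, and the nodes $\omega^{rj_1},\dots,\omega^{rj_w}$ are distinct because $\omega^r$ is a primitive $n$-th root of unity and the $j_k$ are distinct residues modulo $n$ (the hypothesis that $\omega$ is a \emph{primitive} $(rn)$-th root of unity, which you correctly flag, is exactly what guarantees this). Note that the paper itself offers no proof to compare against --- it states the result with a citation to \cite{KZL15} --- so there is no question of divergence from the authors' approach; your proof stands on its own and is complete.
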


Now we state our result.
\begin{theorem}
There exists a linear MDS $(\frac{q^{b+1}-1}{q-1},2b+1)_{q}$ $b$-symbol code for any $b\ge 4$ and $q$ being a prime power.
\end{theorem}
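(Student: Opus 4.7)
\medskip\noindent\textbf{Proof proposal.} The plan is to realize the desired code as a constacyclic code whose defining set is a single $q$-cyclotomic coset, and then verify the three hypotheses of Theorem~\ref{MainThm} with $d=3$. Take $\eta$ to be a primitive element of $\mathbb{F}_q$, so $\eta$ has order $r=q-1$ and $rn=q^{b+1}-1$ with $\gcd(n,q)=1$. Choose $\omega$ to be a primitive element of $\mathbb{F}_{q^{b+1}}$ satisfying $\omega^n=\eta$, and let $C$ be the $\eta$-constacyclic code of length $n$ over $\mathbb{F}_q$ whose defining set is the $q$-cyclotomic coset $Z=C_1=\{1,q,q^2,\ldots,q^b\}$ modulo $q^{b+1}-1$. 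Then $|Z|=b+1$, so $\dim_{\mathbb{F}_q} C=n-(b+1)$, which is precisely the dimension forced on any MDS $(n,2b+1)_q$ $b$-symbol code.

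A parity-check matrix $H$ of $C$ over $\mathbb{F}_q$ is obtained by expanding each $\omega^i$ ($0\le i\le n-1$) in a fixed $\mathbb{F}_q$-basis of $\mathbb{F}_{q^{b+1}}$; this yields a $(b+1)\times n$ matrix whose columns, viewed projectively, are exactly the $\frac{q^{b+1}-1}{q-1}$ points of $PG(b,q)$ arranged in Singer-cycle order. The three conditions of Theorem~\ref{MainThm} are then verified as follows. For condition~1, two distinct columns $\omega^i,\omega^j$ are $\mathbb{F}_q$-dependent iff $\omega^{i-j}\in\mathbb{F}_q^{\ast}=\langle\omega^n\rangle$, iff $n\mid(i-j)$, which is impossible for $i\ne j$ in $\{0,\ldots,n-1\}$. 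For condition~2, any line of $PG(b,q)$ contains $q+1\ge 3$ collinear points, furnishing three $\mathbb{F}_q$-linearly dependent columns. For condition~3, multiplication by $\omega^{-k}$ is an $\mathbb{F}_q$-linear automorphism of $\mathbb{F}_{q^{b+1}}$, so $\omega^k,\omega^{k+1},\ldots,\omega^{k+b}$ are $\mathbb{F}_q$-independent iff $1,\omega,\ldots,\omega^b$ are; and the latter form an $\mathbb{F}_q$-basis of $\mathbb{F}_{q^{b+1}}$ because $\omega$ is primitive, hence its minimal polynomial over $\mathbb{F}_q$ has degree exactly $b+1$.

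Combining these three facts with Theorem~\ref{MainThm} shows that $C$ is a linear MDS $(n,2b+1)_q$ $b$-symbol code, as claimed. As a consistency check, the BCH bound applied to the two consecutive elements $1=1+0\cdot r$ and $q=1+1\cdot r$ of $\Omega\cap Z$ already yields $d_H(C)\ge 3$, in agreement with the analysis above. The only substantive point in the argument is condition~3, which reduces to the standard fact that a primitive element of $\mathbb{F}_{q^{b+1}}$ generates that field over $\mathbb{F}_q$; I do not anticipate any essential obstacle beyond setting up the constacyclic formalism cleanly and checking that the Singer-cycle enumeration of $PG(b,q)$ obtained from consecutive powers of $\omega$ has the required consecutive-independence property.
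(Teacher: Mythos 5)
Your proposal is correct and uses essentially the same construction as the paper: the identical $\omega$-constacyclic code with defining set $\{1,q,\dots,q^b\}$, where your condition~3 (independence of $1,\omega,\dots,\omega^b$, with the cyclic wrap-around absorbed by the $\mathbb{F}_q^{*}$-scaling $\omega^{n}=\eta$) is exactly the paper's observation that $g(x)\mid c(x)$ forces the support of a codeword to span at least $b+2$ consecutive positions. The only difference is packaging: you route the weight count through Theorem~\ref{MainThm} with $d=3$ and the Singer-cycle parity-check matrix, while the paper redoes the same window-counting argument directly on codeword polynomials.
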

\begin{proof}
Let $n=\frac{q^{b+1}-1}{q-1}$, $\omega$ be a primitive element of $\mathbb{F}_{q}$ and $\delta$ be a primitive element of $\mathbb{F}_{q^{b+1}}$ such that $\delta^{n}=\omega$. Note that $g(x)=(x-\delta)(x-\delta^{q})\cdots(x-\delta^{q^{b}})\in\mathbb{F}_{q}[x]$ divides $x^{n}-\omega$. Let $C$ be the $\omega$-constacyclic code $\langle g(x)\rangle\subseteq\mathbb{F}_{q}[x]/(x^{n}-\omega)$. Then $C$ is an $[n,n-b-1,d]_{q}$ linear code with $3\leq d\leq b+2$.

If $d=b+2$, then it is easy to see that $d_{b}\geq2b+1$.

If $3\leq d\leq b+1$, let $c(x)=\sum_{i=0}^{n-1}c_{i}x^{i}$ be a nonzero codeword of $C$. If there exists $j$ such that $c_{j}=c_{j+1}=\cdots=c_{j+b-2}=0,c_{j+b-1}\neq0$, where the subscripts are reduced modulo $n$, then $x^{n-j-b+1}c(x)=\sum_{i=0}^{t}a_{i}x^{i}\in C$, for some $a_{i}\in \mathbb{F}_{q}$, $t\leq n-b$ and $a_{0},a_{t}\neq0$. Note that $3\leq d\leq b+1$ and $t\geq b+1$ since $g(x)|c(x)$, thus we have $wt_{b}(x^{n-j-b+1}c(x))=wt_{b}(c(x))\geq2b+1$. If there does not exist $j$ such that $c_{j}=c_{j+1}=\cdots=c_{j+b-2}=0,c_{j+b-1}\neq0$, then it is easy to see that $wt_{b}(c(x))=n$. Hence $d_{b}\geq2b+1$.
\end{proof}
\section{Conclusion}\label{conclu}
In this paper, we establish a Singleton-type bound for $b$-symbol codes and show that any linear MDS $b$-symbol code with $d_{b}<n$ is also an MDS $(b+1)$-symbol code. We give a sufficient condition for the existence of linear MDS $b$-symbol codes. And then, in specific cases, the problem turns out to be ordering points in $PG(b,q)$ such that no $b+1$ cyclically consecutive points lie in a projective $(b-1)$-space. As a result, we construct new families of linear MDS $b$-symbol codes with a large range of parameters and completely determine the existence of linear MDS $b$-symbol codes over finite fields for certain parameters.

This method is quite interesting and deserves further investigations. Consider the structure established by Lemma \ref{lem1}. Our goal is to order points in $PG(b,q)$ such that no $b+1$ cyclically consecutive points lie in a projective $(b-1)$-space. The main idea is as follows. For even $b$, in $PG(b,q)$, any two projective $\frac{b}{2}$-spaces in different projective $(b-1)$-spaces intersect in a point. For example, when $b=2$, any two of the $q+1$ lines meet in a point, and when $b=4$, $\pi_{ij}$ and $\pi_{st}$ meet in point $O$ ($i\ne s$). For a pair of projective $\frac{b}{2}$-spaces, we first order the points in each space separately such that any $\frac{b}{2}+1$ consecutive points generate the space (more details are omitted here), and then choose points alternatively from the pair of sequences of ordered points, just as what we do in Lemma \ref{thmb2} and Lemma \ref{thmb4}. For odd $b$, in $PG(b,q)$, any two projective $\frac{b-1}{2}$-spaces in different projective $(b-1)$-spaces have no points in common. For example, when $b=3$, in the structure established by Lemma \ref{lem1}, lines on different planes have no points in common. Similarly, for a pair of projective $\frac{b-1}{2}$-spaces, we first order the points in each space separately such that any $\frac{b-1}{2}+1$ consecutive points generate the space. Then we choose points alternatively from the pair of sequences of ordered points, just as what we do in Lemma \ref{thmb3}.

By the discussion above, it seems that we can give a strategy or an algorithm to order points in $PG(b,q)$ for any $b$ by induction, and thus can construct linear MDS $(n,2b+1)_{q}$ $b$-symbol codes for any $b$ and $2b+1\le n\le \frac{q^{b+1}-1}{q-1}$. However, we believe that such a proof will be tedious, and we prefer to present this as the following conjecture which calls for a neat and brief proof. We give more constructions in  Subsection \ref{sb4} and in Section \ref{concyc} to support the conjecture.
\begin{conjecture}
  There exist linear MDS $(n,2b+1)_{q}$ $b$-symbol codes for $q$ being a prime power, $b\ge 2$ and $2b+1\le n\le \frac{q^{b+1}-1}{q-1}.$
\end{conjecture}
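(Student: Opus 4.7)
The plan is to realize this code as an $\omega$-constacyclic code $C$ over $\mathbb{F}_q$ and use the combination of cyclic-shift invariance and polynomial divisibility to force every nonzero codeword to have $b$-weight at least $2b+1$. Concretely, let $\omega$ be a primitive element of $\mathbb{F}_q$ and take a primitive element $\delta \in \mathbb{F}_{q^{b+1}}$ chosen so that $\delta^n = \omega$ (arrangeable because $\delta^n$ has order $q-1$, matching $\omega$). Set $g(x) = \prod_{k=0}^{b}(x-\delta^{q^k})$. The $q$-th power Frobenius cycles the roots, so $g \in \mathbb{F}_q[x]$; each root satisfies $(\delta^{q^k})^n = \omega^{q^k} = \omega$, so $g \mid x^n - \omega$. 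Let $C = \langle g(x) \rangle \subseteq \mathbb{F}_q[x]/(x^n-\omega)$, an $\omega$-constacyclic code of size $q^{n-b-1}$. Because the Singleton bound $|C| \le q^{n-d_b+b}$ already forces $d_b \le 2b+1$ for this code, showing $d_b \ge 2b+1$ is exactly what makes $C$ an MDS $(n,2b+1)_q$ $b$-symbol code.

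Let $c \in C$ be nonzero. I will dichotomize on the zero-pattern of $c$. If $c$ has no run of $b-1$ consecutive zeros (cyclically), then no length-$b$ window of $c$ can be entirely zero, so $wt_b(c) = n = 1+q+\cdots+q^b \ge 2b+1$ (valid for all $b \ge 4$, $q \ge 2$). Otherwise pick $j$ with $c_j = c_{j+1} = \cdots = c_{j+b-2} = 0$ and $c_{j+b-1} \ne 0$, and apply the $\omega$-constacyclic shift $v(x) = x^{n-j-b+1} c(x)$ in $\mathbb{F}_q[x]/(x^n-\omega)$, which again lies in $C$. By construction the last $b-1$ entries of $v$ are zero and the $0$-th entry is nonzero, so $v = (a_0, a_1, \ldots, a_t, 0, \ldots, 0)$ with $a_0, a_t \ne 0$ and $t \le n-b$. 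Since $g(x) \mid v(x)$ in $\mathbb{F}_q[x]$ and $\deg g = b+1$, this forces $t \ge b+1$. A direct window count for a polynomial of this shape (exactly the one used in the proof of Theorem~\ref{MainThm}) gives $wt_b(v) = t+b \ge 2b+1$, and the $b$-weight is preserved by the constacyclic shift, so $wt_b(c) \ge 2b+1$.

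The main obstacle is that the Hamming distance of $C$ can be as small as $3$: one checks $3 \le d \le b+2$ by noting that the columns of the parity-check matrix index distinct $\mathbb{F}_q$-lines of $\mathbb{F}_{q^{b+1}}$ (hence are pairwise linearly independent over $\mathbb{F}_q$), while the Singleton bound caps $d$ at $b+2$. Consequently the elementary estimate $wt_b \ge wt_H + b - 1$ from Proposition~\ref{prop2} is not strong enough once $d \le b+1$, and a direct Hamming-weight attack fails. The shift-plus-divisibility step is what supplies the missing slack: rotating any long zero-run to the tail produces a codeword whose polynomial representative must stretch to degree at least $b+1$ simply because $g$ divides it, and the resulting support length is exactly what converts ``$\deg g = b+1$'' into ``$wt_b \ge 2b+1$''.
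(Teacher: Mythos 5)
There is a genuine gap: the statement you are asked to prove is a \emph{conjecture} asserting existence for \emph{every} length $n$ with $2b+1\le n\le \frac{q^{b+1}-1}{q-1}$, but your construction produces a code of exactly one length, $n=\frac{q^{b+1}-1}{q-1}$. Everything in your argument is pinned to that value: you need $\delta$ primitive in $\mathbb{F}_{q^{b+1}}$ with $\delta^{n}=\omega$ a primitive element of $\mathbb{F}_q$ (which forces $n=\frac{q^{b+1}-1}{q-1}$ up to the choice of $\delta$), you need the Frobenius orbit of $\delta$ to have size exactly $b+1$ so that $\deg g=b+1$ and $\dim C=n-b-1$, and you need the roots $\delta^{q^k}$ to be pairwise $\mathbb{F}_q$-linearly independent to get $d\ge 3$. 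None of this transfers to a shorter $n$: for general $n$ in the range, $x^{n}-\eta$ need not have an irreducible factor of degree $b+1$ over $\mathbb{F}_q$, and a constacyclic code is available only for lengths compatible with the field. (Puncturing or shortening your code does not help either, since those operations destroy the cyclic structure on which the ``rotate the zero-run to the tail'' step relies.) What you have written is essentially the paper's own Theorem in Section~\ref{concyc}, which establishes the single length $n=\frac{q^{b+1}-1}{q-1}$ (for $b\ge 4$); the paper explicitly leaves the full range of $n$ open, and covers other portions of it by entirely different means (ordering points in $PG(b,q)$, Lemma~\ref{Thm1} and Theorem~\ref{bound1}).

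Two smaller points. First, your claim $wt_{b}(v)=t+b$ is not an equality in general: if $v=(a_0,\dots,a_t,0,\dots,0)$ has an internal run of at least $b$ consecutive zeros, some windows starting inside $\{0,\dots,t\}$ vanish. The correct argument (as in the proof of Theorem~\ref{MainThm}) uses $t\ge b+1$ \emph{together with} the fact that $v$ has at least $d\ge 3$ nonzero coordinates to exhibit $2b+1$ distinct nonzero windows; you should make that dependence on $d\ge 3$ explicit rather than asserting an exact window count. Second, your own parenthetical ``valid for all $b\ge 4$'' signals that the argument as written does not address $b=2,3$, whereas the conjecture starts at $b\ge 2$. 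In short, the proposal correctly reproduces one special case but does not prove the statement.
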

Following the discussions and conclusions in Subsection \ref{sb4}, we also propose the following conjecture.
\begin{conjecture}
  There exist linear MDS $(n,2b)_{q}$ $b$-symbol codes for  $q$ being a prime power, $b\ge 2$ and $n\ge 2b$.
\end{conjecture}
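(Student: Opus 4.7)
The plan is to exploit the divisibility hypothesis $b \mid n$ to build the required configuration directly in $V(b,q)$ by simple periodic concatenation of a basis, and then invoke Lemma \ref{thmd2}. Writing $n = kb$ with $k \geq 2$ (which holds since $n \geq 2b$), I would fix any basis $\{e_1, e_2, \ldots, e_b\}$ of $V(b,q)$ and form the length-$n$ sequence obtained by laying down $k$ consecutive copies of $(e_1, e_2, \ldots, e_b)$.

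First I would verify condition (1) of Lemma \ref{thmd2}: because $k \geq 2$, every basis vector appears at least twice in the sequence, so any two occurrences of the same $e_i$ produce a pair of linearly dependent vectors. Next I would check condition (2): the sequence is purely periodic with period $b$, and since the total length $n = kb$ is an integer multiple of the period, the cyclic wrap-around is seamless. Any $b$ cyclically consecutive vectors therefore form some cyclic rotation $(e_j, e_{j+1}, \ldots, e_b, e_1, \ldots, e_{j-1})$ of the fixed basis, which is clearly linearly independent. Lemma \ref{thmd2} then immediately delivers a linear MDS $(n, 2b)_q$ $b$-symbol code.

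There is essentially no obstacle to this argument once Lemma \ref{thmd2} is in hand; the entire content lies in recognizing that $b \mid n$ is exactly what makes the periodic construction cyclically compatible with the window constraint. The restriction $b \geq 5$ in the hypothesis is not needed for the construction itself but only serves to demarcate the new contribution from the already established Theorem \ref{thmn} and Theorem \ref{thmb5}, which together cover the small cases $b \leq 4$ and $(b,q) = (5,q)$ with $q \geq 3$.
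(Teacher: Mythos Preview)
The statement you are addressing is a \emph{conjecture}: the paper does not prove it, it merely proposes it after the partial results of Section~\ref{sb4}. There is therefore no ``paper's own proof'' to compare against.

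More importantly, your argument does not prove the conjecture as stated. You invoke a ``divisibility hypothesis $b\mid n$'' and later a ``restriction $b\ge 5$ in the hypothesis,'' but neither of these appears in the conjecture: it asserts existence for \emph{every} prime power $q$, every $b\ge 2$, and every $n\ge 2b$. Your periodic-basis construction is valid only when $n$ is a multiple of $b$; for $n\not\equiv 0\pmod b$ the wrap-around is no longer seamless and the final window of $b$ cyclically consecutive vectors can repeat a basis element, destroying linear independence. So the proposal leaves the generic case $b\nmid n$ completely untreated.

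What you have actually written is (a correct and essentially identical version of) the paper's proof of the earlier theorem that handles the special case $b\mid n$, $b\ge 5$. That theorem, together with Theorem~\ref{thmn} and Theorem~\ref{thmb5}, is precisely the evidence the authors cite in \emph{support} of the conjecture; it is not a proof of it. Resolving the conjecture would require a construction that works for arbitrary $n\ge 2b$ over every $\mathbb{F}_q$, and the paper explicitly leaves this open.
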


\end{document}